\newtheorem{definition}{Definition}
\newtheorem{lemma}[definition]{Lemma}
\newtheorem{theorem}[definition]{Theorem}
\newtheorem{claim}[definition]{Claim}
\newtheorem{example}[definition]{Example}
\newcommand{\Roit}{\ensuremath{R_{\scriptscriptstyle 1/3}}}
\newcommand{\hroit}{\ensuremath{\hat R_{\scriptscriptstyle 1/3}}}
\newcommand{\roik}[1]{\ensuremath{R_{\scriptscriptstyle 1/#1}}}
\newcommand{\gadget}{\ensuremath{\Phi_1}}
\newcommand{\br}{\ensuremath{\mathit{BR}}}
\newcommand{\pr}{\ensuremath{\mathrm{pr}}} 
\newcommand{\clone}[1]{\ensuremath{[#1]}}
\newcommand{\cclone}[1]{\ensuremath{\langle #1 \rangle}}
\newcommand{\str}[1]{\ensuremath{\mathrm{Str}}}
\newcommand{\pcclone}[1]{\ensuremath{\langle #1 \rangle_{\not \exists}}}
\newcommand{\ar}{\ensuremath{\mathrm{ar}}}
\newcommand{\eq}{\ensuremath{\rm{Eq}}}
\newcommand{\ppol}{\ensuremath{\rm{pPol}}}
\newcommand{\pol}{\ensuremath{\rm{Pol}}}
\newcommand{\inv}{\ensuremath{\rm{Inv}}}
\newcommand{\domain}{\ensuremath{\mathrm{domain}}}
\newcommand{\CSP}{\protect\ensuremath\problemFont{CSP}}
\newcommand{\SAT}{\protect\ensuremath\problemFont{SAT}}
\newcommand{\problemFont}[1]{\textsc{#1}}
\newcommand{\rest}[1]{\ensuremath{#1_{\mid \mathbb{B}}}}
\newcommand{\sig}{\ensuremath{\mathrm{Sig}}}
\newcommand{\val}{\ensuremath{\mathrm{Val}}}
\newcommand{\seq}{\ensuremath{\mathrm{Seq}}}
\newcommand{\rel}[1]{\ensuremath{\Psi_{#1}}}
\newcommand{\malt}{\ensuremath{\phi}}
\newcommand{\infd}{\ensuremath{D_{\infty}}}
\begin{document}
\title{Kernelization of Constraint Satisfaction Problems: A Study
  through Universal Algebra \footnote{This is an extended preprint of
    {\em Kernelization of Constraint Satisfaction Problems: A Study
  through Universal Algebra}, appearing in Proceedings of the 23rd
International Conference on Principles and Practice of Constraint
Programming (CP 2017)}}
\author[1]{Victor Lagerkvist\thanks{victor.lagerqvist@tu-dresden.de}}
\author[2]{Magnus Wahlstr\"om\thanks{magnus.wahlstrom@rhul.ac.uk}}
\affil[1]{\small Institut f\"ur Algebra, TU Dresden, Dresden, Germany}
\affil[2]{\small Department of Computer Science, Royal Holloway,
  University of London, Great Britain}
\date{}
\pagenumbering{gobble} 
\maketitle

\abstract{A {\em kernelization} algorithm for a computational problem
is a procedure which compresses an instance into an equivalent
instance whose size is bounded with respect to a complexity
parameter. For
the Boolean satisfiability problem (SAT), and the constraint
satisfaction problem (CSP), there exist many results concerning upper
and lower bounds for kernelizability of specific problems, but it is safe to
say that we lack general methods to determine whether a given
SAT problem admits a kernel of a particular size. This could be
contrasted to the currently flourishing research program of
determining the classical complexity of finite-domain CSP problems, where
almost all non-trivial tractable classes have been identified with the
help of algebraic properties.  In this paper, we take an algebraic approach to the problem of
characterizing the kernelization limits of NP-hard SAT and CSP problems,
parameterized by the number of variables. Our main focus is on problems
admitting linear kernels, as has, somewhat surprisingly,
previously been shown to exist. We show that a CSP problem has
a kernel with $O(n)$ constraints if it can be embedded (via a domain extension)
into a CSP problem which is preserved by a Maltsev operation.
We also study extensions of this towards SAT and CSP problems with kernels with $O(n^c)$ 
constraints, $c>1$, based on embeddings into CSP problems 
preserved by a $k$-edge operation, $k \leq c+1$.
These results follow via a variant of the celebrated few subpowers algorithm.
In the complementary direction, we give indication that the Maltsev
condition might be a complete characterization of SAT problems with linear kernels,
by showing that an algebraic condition that is shared by all problems with a Maltsev embedding
is also necessary for the existence of a linear kernel unless NP $\subseteq$ co-NP/poly.

\pagenumbering{arabic}  

\section{Introduction}
\label{sec:intro}
{\em Kernelization} is a preprocessing technique based on reducing an
instance of a computationally hard problem in polynomial time to an equivalent instance,
a {\em kernel}, whose size is bounded by a function $f$ with respect
to a given complexity parameter. The function $f$ is referred to as
the {\em size} of the kernel, and if the size is polynomially bounded
we say that the problem admits a {\em polynomial kernel}. 
A classical example is 
\textsc{Vertex Cover}, which admits a kernel with
$2k$ vertices, where $k$ denotes the size of the cover~\cite{nemhauser1975}. 
Polynomial kernels are of great interest in parameterized complexity, 
as well as carrying practical significance in speeding up subsequent
computations (e.g., the winning contribution in the 2016 PACE
challenge for \textsc{Feedback Vertex Set} used a novel 
kernelization step as a key component (see \url{https://pacechallenge.wordpress.com/}).

When the complexity parameter is a size parameter, e.g.,
the number of variables $n$, then such a size reduction 
is also referred to as \emph{sparsification} (although
a sparsification is not always required to run in polynomial time). 
A prominent example is the famous \emph{sparsification lemma}
that underpins research into the Exponential Time Hypothesis~\cite{impagliazzo98},
which shows that for every $k$ there is a subexponential-time
reduction from $k$-SAT on $n$ variables to $k$-SAT on $O(n)$ clauses,
and hence $\tilde O(n)$ bits in size.
However, the super-polynomial running time is essential to this result. 
Dell and van Melkebeek~\cite{DellM14} showed that 
$k$-SAT cannot be kernelized even down to size $O(n^{k-\varepsilon})$,
and \textsc{Vertex Cover} cannot be kernelized to size $O(n^{2-\varepsilon})$, 
for any $\varepsilon > 0$ unless the polynomial hierarchy collapses
(in the sequel, we will make this assumption implicitly).
These results suggest that in general, polynomial-time sparsification
can give no non-trivial size guarantees. (Note that a 
kernel of size $O(n^k)$ for $k$-SAT is trivial.)
The first result to the contrary was by Bart Jansen (unpublished until recently~\cite{JansenP16MFCS}),
who observed that \textsc{1-in-$k$-SAT}
admits a kernel with at most $n$ constraints using Gaussian elimination. 
More surprisingly, Jansen and Pieterse~\cite{jansen15} showed that
the \textsc{Not-All-Equal $k$-SAT} problem admits a kernel
with $O(n^{k-1})$ constraints, 
improving on the trivial bound by a factor of $n$ and
settling an implicit open problem. 
In later research,
they improved and generalized the method, and also showed that the bound of 
$O(n^{k-1})$ is tight~\cite{JansenP16MFCS}.
These improved upper bounds are all based on rephrasing the $\SAT$ problem
as a problem of low-degree polynomials, and exploiting linear dependence
to eliminate superfluous constraints.
Still, it is fair to say that 
we currently lack the tools 
for making a general analysis of the 
kernelizability of a generic $\SAT$ problem.

In this paper we take a step in this direction,
by studying the kernelizability of the {\em constraint
  satisfaction problem} over a constraint language $\Gamma$
($\CSP(\Gamma)$), parameterized by the number of variables $n$, which
can be viewed as the problem of 
determining whether a set of constraints over $\Gamma$
is satisfiable. Some notable examples of problems of this
kind are $k$-colouring, $k$-SAT, 1-in-$k$-SAT, and
not-all-equal-$k$-SAT. We will occasionally put a particular emphasis
on the Boolean $\CSP$ problem and therefore denote this problem by $\SAT(\Gamma)$. Note that $\CSP(\Gamma)$ has
a trivial polynomial kernel for any finite language $\Gamma$
(produced by simply discarding duplicate constraints), but the question
remains for which languages $\Gamma$ we can improve upon this. 
Concretely, our question in this paper is for which languages $\Gamma$
the problem $\CSP(\Gamma)$ admits a kernel of $O(n^c)$ constraints, 
for some $c \geq 1$, with a particular focus on linear kernels ($c=1$).

\textbf{The algebraic approach in parameterized and fine-grained complexity.}
For any language $\Gamma$, 
the classical complexity of $\CSP(\Gamma)$ (i.e., whether
$\CSP(\Gamma)$ is in P) is determined by the existence of certain
algebraic invariants of $\Gamma$ known as \emph{polymorphisms}~\cite{jeavons1998}. 
This gave rise to the \emph{algebraic approach} to characterizing the 
complexity of $\CSP(\Gamma)$ by studying algebraic properties.
It has been conjectured that for every $\Gamma$, $\CSP(\Gamma)$
is either in P or NP-complete, and that the tractability of a CSP problem 
can be characterized by a finite list of polymorphisms~\cite{bulatov2005}.
Recently, several independent results appeared, claiming to settle
this conjecture in the positive~\cite{bulatov2017,feder2017,zhuk2017}.

However, for purposes of parameterized and fine-grained complexity
questions, looking at polymorphisms alone is too coarse.
More technically, the polymorphisms of $\Gamma$ characterize 
the expressive power of $\Gamma$ up to \emph{primitive positive definitions},
i.e., up to the use of conjunctions, equality constraints, 
and existential quantification, whereas for many questions a liberal use
of existentially quantified local variables is not allowed. 
In such cases, one may look at the expressive power under 
 \emph{quantifier-free} primitive positive definitions (qfpp-definitions),
allowing only conjunctions and equality constraints. 
This expressive power is characterized by more fine-grained
algebraic invariants called \emph{partial polymorphisms}. 
For example, there are numerous dichotomy results 
for the complexity of \emph{parameterized} $\SAT(\Gamma)$
and $\CSP(\Gamma)$ problems, both for so-called FPT algorithms
and for kernelization~\cite{KratschMW16,KratschW10,KrokhinM12,Marx05},
and in each of the cases listed, a dichotomy is given which
is equivalent to requiring a finite list of partial polymorphisms of $\Gamma$. 
Similarly, Jonsson et al.~\cite{jonsson2017} showed that the
exact running times of NP-hard SAT$(\Gamma)$ and CSP$(\Gamma)$ problems
in terms of the number of variables $n$
are characterized by the partial polymorphisms of $\Gamma$. 

Unfortunately, studying properties of $\SAT(\Gamma)$ and $\CSP(\Gamma)$
for questions phrased in terms of the size parameter $n$
is again more complicated than for more permissive parameters $k$.
For example, it is known that for every finite set $P$ of 
strictly partial polymorphisms, the number of relations
invariant under $P$ is double-exponential in terms of
the arity $n$ (hence they cannot all be described
in a polynomial number of bits)~\cite[Lemma~35]{lagerkvist2016b}.
It can similarly be shown that the existence of a polynomial kernel
cannot be characterized by such a finite set $P$. 
Instead, such a characterization must be given in another way
(for example, Lagerkvist et al.~\cite{lagerkvist2015}
provide a way to finitely characterize all partial polymorphisms
of a finite Boolean language $\Gamma$). 

\textbf{Our results.} In Section~\ref{sec:embeddings} we generalize and extend the results of
Jansen and Pieterse~\cite{JansenP16MFCS} in the case of linear kernels
to a general recipe for NP-hard $\SAT$ and $\CSP$ problems 
in terms of the existence of a \emph{Maltsev embedding}, i.e.,
an embedding of a language $\Gamma$ into a 
tractable language $\Gamma'$ on a larger domain with a \emph{Maltsev polymorphism}. 
We show that for any language $\Gamma$ with a Maltsev embedding into
a finite domain, $\CSP(\Gamma)$ has a kernel with $O(n)$ constraints. More generally, we in
Section~\ref{sec:higher}, turn to the problem of finding kernels with
$O(n^c)$ constraints ($c > 1$) where we utilize {\em $k$-edge} embeddings,
and a technique which encompasses the recent results from Jansen and
Pieterse, concerning SAT problems representable as low-degree
polynomials over a finite field~\cite{JansenP16MFCS}. 
Attempting an algebraic characterization, we in Section~\ref{section:lower_bounds} also show
an infinite family of \emph{universal} partial operations
which are partial polymorphisms of every language $\Gamma$
with a Maltsev embedding, and show that these operations
guarantee the existence of a Maltsev embedding for $\Gamma$,
albeit into a language with an infinite domain. 

Turning to lower bounds against linear kernels, 
we show that the smallest of these universal partial operations
is also necessary, in the sense that for any Boolean language $\Gamma$
which is not invariant under this operation,
$\SAT(\Gamma)$ admits no kernel of size $O(n^{2-\varepsilon})$
for any $\varepsilon>0$. 
We conjecture that this can be completed into a tight characterization
-- i.e., that at least for Boolean languages $\Gamma$,  
$\SAT(\Gamma)$ admits a linear kernel
if and only if it is invariant under all universal partial 
Maltsev operations. 

\section{Preliminaries}
\label{sec:prel}
In this section we introduce the constraint satisfaction problem,
kernelization, and the algebraic machinery that will be used
throughout the paper. 

\subsection{Operations and Relations}
\label{sec:operations}
An $n$-ary function $f : D^{n} \rightarrow D$ over a domain $D$ is
typically referred to as a {\em operation} on $D$, although we will
sometimes use the terms function and operation interchangeably. We let $\ar(f) = n$
denote the arity of $f$. Similarly, if $R \subseteq D^{n}$ is an
$n$-ary relation over $D$ we let $\ar(R) = n$. If $t \in D^{n}$ is a
tuple we let $t[i]$ denote the $i$th element in $t$ and we let
$\pr_{i_1, \ldots, i_{n'}}(t) = (t[i_1], \ldots, t[i_{n'}])$, $n' \leq n$,
denote the {\em projection} of $t$ on (not necessarily distinct) coordinates $i_1, \ldots,
i_{n'} \in \{1, \ldots, n\}$. Similarly, if $R$ is an $n$-ary relation
we let $\pr_{i_1, \ldots, i_{n'}}(R) = \{\pr_{i_1, \ldots, i_{n'}}(t)
\mid t \in R\}$. We will often represent relations by logical
formulas, and if $\psi$ is a first-order formula with free variables
$x_1, \ldots, x_k$ we by $R(x_1, \ldots, x_k) \equiv \psi(x_1, \ldots, x_k)$
denote the relation $R = \{(f(x_1), \ldots, f(x_k)) \mid f \text{ is a
satisfying assignment to } \psi\}$.

\subsection{The Constraint Satisfaction Problem}
\label{sec:csp}
A set of relations $\Gamma$ is referred to as a {\em constraint
  language}. The {\em constraint satisfaction problem} over a
constraint language $\Gamma$ over $D$
($\CSP(\Gamma)$) is the computational decision problem defined as
follows.
\smallskip

\noindent
{\sc Instance:} A set $V$ of variables and a set $C$ of constraint
applications $R(v_1,\ldots,v_{k})$ where $R \in \Gamma$, $\ar(R) = k$,
and $v_1,\ldots,v_{k} \in V$.

\noindent
{\sc Question:} Is there a function $f : V \rightarrow D$ such
that $(f(v_1),\ldots,f(v_{k})) \in R$ for each 
$R(v_1,\ldots,v_{k})$~in~$C$?

\smallskip
In the particular case when $\Gamma$ is Boolean we denote
$\CSP(\Gamma)$ by $\SAT(\Gamma)$, and we let $\br$
denote the set of all Boolean relations. As an example, first consider the ternary relation $\Roit =
\{(0,0,1),(0,1,0),(1,0,0)\}$. It is then readily seen that
$\SAT(\{\Roit\})$ can be viewed as an alternative formulation of 
the 1-in-3-SAT problem restricted to instances consisting only of positive
literals. More generally, if we let $\roik{k} = \{(x_1, \ldots, x_k)
\in \{0,1\}^{k} \mid x_1 + \ldots + x_k = 1\}$, then
$\SAT(\{\roik{k}\})$ is a natural formulation of 1-in-$k$-SAT without
negation.

\subsection{Kernelization}
\label{sec:kern}
A {\em parameterized problem} is a subset of $\Sigma^* \times
\mathbb{N}$ where $\Sigma$ is a finite alphabet. Hence, each instance
is associated with a natural number, called the {\em parameter}.

\begin{definition}
  A {\em kernelization algorithm}, or a {\em kernel}, for a parameterized problem $L
  \subseteq \Sigma^{*} \times \mathbb{N}$ is a polynomial-time
  algorithm which, given an instance $(x,k) \in \Sigma^{*} \times \mathbb{N}$, computes $(x',k')
  \in \Sigma^{*}\times \mathbb{N}$ such that (1) $(x,k) \in L$ if and
  only if $(x',k') \in L$ and (2) $|x'| + k' \leq f(k)$ for some
  function $f$.
\end{definition}
The function $f$ in the above definition is sometimes called the {\em
size} of the kernel. In this paper, we are mainly 
interested in the case where the parameter denotes the number of
variables in a given instance. 

\subsection{Polymorphisms and Partial Polymorphisms}
\label{sec:pol}
In this section we define the link between constraint languages and
algebras that was promised in Section~\ref{sec:intro}. 
If $f$ is an $n$-ary operation and $t_1, \ldots, t_n$ a sequence of
$k$-ary tuples we can in a natural way obtain a $k$-ary tuple by
applying $f$ componentwise, i.e., $f(t_1, \ldots, t_n) = (f(t_1[1],
\ldots, t_n[1]), \ldots, f(t_1[k], \ldots, t_n[k]))$.

\begin{definition}
An $n$-ary operation $f$ is a {\em polymorphism} of a $k$-ary relation $R$
if $f(t_1, \ldots, t_n) \in R$ for each sequence of tuples $t_1, \ldots,
t_n \in R$. 
\end{definition}

If $f$ is a polymorphism of $R$ we also say that $R$ is {\em invariant
} under $f$, or that $f$ {\em preserves} $R$, and for a constraint language $\Gamma$ we let $\pol(\Gamma)$ denote the set of operations
preserving every relation in $\Gamma$.
Similarly, if $F$ is
a set of functions, we let $\inv(F)$ denote the set of all relations
invariant under $F$.
Sets of functions of the form $\pol(\Gamma)$ are referred to as {\em
  clones}. It is well known that $\pol(\Gamma)$ (1) for each $n \geq
1$ and each $1 \leq i \leq n$ contains the {\em projection function}
$\pi^n_i(x_1, \ldots, x_i, \ldots, x_n) = x_i$, and (2) if $f, g_1,
\ldots, g_{m} \in \pol(\Gamma)$, where $\ar(f) = m$ and all $g_i$ have the same arity
$n$, then the {\em composition} $f \circ g_1, \ldots, g_{m}(x_1,
\ldots, x_n) = f(g_1(x_1, \ldots, x_n), \ldots, g_m(x_1, \ldots,
x_n))$ is included in $\pol(\Gamma)$. Similarly, sets of the form
$\inv(F)$ are referred to as {\em relational clones}, or {\em
  co-clones}, and are sets of relations closed under {\em primitive
  positive definitions} (pp-definitions), which are logical formulas
consisting of existential quantification, conjunction, and equality
constraints. In symbols, we say that a $k$-ary relation $R$ has a
pp-definition over a constraint language $\Gamma$ over a domain $D$ if
$R(x_1, \ldots, x_{k}) \equiv \exists y_1, \ldots,
y_{k'}\, . \, R_1(\mathbf{x_1}) \wedge \ldots \wedge
R_m({\mathbf{x_m}}),$ where each $R_i \in \Gamma \cup \{\eq\}$, $\eq =
\{(x,x) \mid x \in D\}$ and
each $\mathbf{x_i}$ is an $\ar(R_i)$-ary tuple of variables over
$x_1,\ldots, x_{k}$, $y_1, \ldots, y_{k'}$. Clones and co-clones are
related via the following {\em Galois connection}.

\begin{theorem}[\cite{BKKR69i,BKKR69ii,Gei68}]
  \label{theorem:galois}
  Let $\Gamma$ and $\Gamma'$ be two constraint languages. Then $\Gamma \subseteq \inv(\pol(\Gamma'))$ if and only if  $\pol(\Gamma') \subseteq
    \pol(\Gamma)$.
\end{theorem}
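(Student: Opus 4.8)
The plan is to observe that, for the statement exactly as written, this is nothing more than the adjunction property of the Galois connection induced by the ``preserves'' relation between operations and relations, and that both sides of the claimed equivalence unfold to literally the same condition; no combinatorial construction is required.

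First I would record the two elementary monotonicity facts: if $\Gamma_1 \subseteq \Gamma_2$ then $\pol(\Gamma_2) \subseteq \pol(\Gamma_1)$ (an operation preserving every relation of $\Gamma_2$ preserves every relation of $\Gamma_1 \subseteq \Gamma_2$), and dually $F_1 \subseteq F_2$ implies $\inv(F_2) \subseteq \inv(F_1)$. Next I would unfold each side. By definition of $\inv$, a relation $R$ belongs to $\inv(\pol(\Gamma'))$ exactly when $R$ is preserved by every $f \in \pol(\Gamma')$; hence ``$\Gamma \subseteq \inv(\pol(\Gamma'))$'' asserts precisely that every $f \in \pol(\Gamma')$ preserves every $R \in \Gamma$. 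By definition of $\pol$, an operation $f$ belongs to $\pol(\Gamma)$ exactly when $f$ preserves every $R \in \Gamma$; hence ``$\pol(\Gamma') \subseteq \pol(\Gamma)$'' asserts precisely that every $f \in \pol(\Gamma')$ preserves every $R \in \Gamma$. The two conditions coincide verbatim, which yields both implications simultaneously. Phrased abstractly, this shows that $(\pol, \inv)$ is an antitone Galois connection, that $\inv \circ \pol$ and $\pol \circ \inv$ are closure operators, and that the theorem is just the adjunction ``$\Gamma \subseteq \inv(F)$ iff $F \subseteq \pol(\Gamma)$'' specialised to $F = \pol(\Gamma')$. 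Since nothing above uses finiteness of the domain or any property of the operations beyond componentwise application, there is no genuine obstacle for the statement as phrased: the ``hard'' step is vacuous here.

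The real content behind the cited references — not needed for this theorem, but the reason one states it — is the concrete description of these closure operators, namely that $\inv(\pol(\Gamma))$ equals the set of relations pp-definable from $\Gamma$ and $\pol(\inv(F))$ the clone generated by $F$. Were one to prove that, the genuine obstacle would be the nontrivial inclusion over a finite domain $D$: given a relation $R$ not pp-definable from $\Gamma$, one must exhibit a polymorphism of $\Gamma$ that fails to preserve it. The standard route is the indicator (exponentiation) construction: first observe that preservation of the $m$-ary relation $R$ is already determined by the $|R|$-ary polymorphisms of $\Gamma$ (replace an arbitrary violating polymorphism by its composition with projections onto a fixed enumeration $r_1, \dots, r_{|R|}$ of $R$); then write down, via a pp-formula over $\Gamma$, the $|D|^{|R|}$-ary relation $P$ whose tuples are exactly these $|R|$-ary polymorphisms; then note that the projection of $P$ onto the $m$ coordinates indexed by the rows of the matrix with columns $r_1, \dots, r_{|R|}$ always contains $R$ (witnessed by the projections $\pi^{|R|}_j$) and lies in the pp-closure of $\Gamma$; and finally conclude that $R$ is preserved by $\pol(\Gamma)$ iff it equals this projection, so that otherwise any tuple in the difference is precisely some polymorphism of $\Gamma$ applied to $r_1, \dots, r_{|R|}$ and landing outside $R$. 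For the abbreviated statement proved above, however, pure unfolding of the definitions suffices.
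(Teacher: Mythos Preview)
Your proposal is correct. The paper does not supply its own proof of this theorem; it simply cites it as a classical result of Geiger and Bodnar\v{c}uk--Kalu\v{z}nin--Kotov--Romov, so there is nothing to compare against beyond noting that your unfolding-of-definitions argument is exactly the standard verification that $(\pol,\inv)$ form an antitone Galois connection, and your aside correctly identifies that the substantive content of those references is the description of the closure $\inv(\pol(\Gamma'))$ via pp-definability rather than the equivalence as stated.
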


As a shorthand, we let $\clone{F} = \pol(\inv(F))$
denote the smallest clone containing $F$ and $\cclone{\Gamma} =
\inv(\pol(\Gamma))$ the smallest co-clone containing $\Gamma$.
Using Theorem~\ref{theorem:galois} Jeavons et al.\ proved that if
$\Gamma$ and $\Gamma'$ are two finite constraint languages and
$\pol(\Gamma) \subseteq \pol(\Gamma')$, then $\CSP(\Gamma')$ is
polynomial-time many-one reducible to
$\CSP(\Gamma)$~\cite{jeavons1998}. As remarked in
Section~\ref{sec:intro}, while this theorem is useful for establishing
complexity dichotomies for $\CSP$ and related
problems~\cite{barto2014,creignou2008b}, it offers little information
on whether a problem admits a kernel of a particular size.
Hence, in order to have any hope of studying kernelizability of $\SAT$
problems, we need algebras more fine-grained than polymorphisms. In
our case these algebras will consist of {\em partial operations}
instead of total operations. An $n$-ary partial operation over a set $D$ of values is a
map of the form $f : X \rightarrow D$, where $X \subseteq D^{n}$ is
called the {\em domain} of $f$. As in the case of total operations we
let $\ar(f) = n$, and furthermore let $\domain(f) = X$. If $f$ and $g$ are $n$-ary partial
operations such that $\domain(g) \subseteq \domain(f)$ and $f(x_1,
\ldots, x_n) = g(x_1, \ldots, x_n)$ for each $(x_1, \ldots, x_n) \in
\domain(g)$, then $g$ is said to be a {\em subfunction} of $f$. 
\begin{definition}
An $n$-ary partial operation $f$ is a {\em partial polymorphism} of a $k$-ary relation $R$ if, for
every sequence $t_1, \ldots, t_n \in R$, either $f(t_1, \ldots, t_n)
\in R$ or there exists $i \in \{1, \ldots, k\}$ such that $(t_1[i],
\ldots, t_n[i]) \notin \domain(f)$. 
\end{definition}
Again, this notion easily
generalizes to constraint languages, and if we let $\ppol(\Gamma)$ denote
the set of partial polymorphisms of the constraint language $\Gamma$,
we obtain a {\em strong partial clone}. It is known that strong
partial clones are sets of partial operations which are (1) closed
under composition of partial operations and (2) containing all partial
projection functions~\cite{romov1981}. More formally, the first
condition means that if $f, g_1, \ldots, g_m$ are included in the
strong partial clone, where $f$ is $m$-ary
and every $g_i$ is $n$-ary, then the function $f \circ g_1, \ldots, g_m(x_1, \ldots,
x_n) = f(g_1(x_1, \ldots, x_n), \ldots, g_m(x_1, \ldots, x_n))$ is
also included in the strong partial clone, and this function will be
defined for $(x_1, \ldots, x_n) \in D^{n}$ if and only if $(x_1,
\ldots, x_n) \in \bigcap^{m}_{i=1} \domain(g_i)$ and $(g_1(x_1,
\ldots, x_n), \ldots, g_m(x_1, \ldots, x_n)) \in \domain(f)$. The
second condition, containing all partial projection functions, is
known to be equivalent to closure under taking subfunctions;
a property which in the literature is sometimes called {\em strong}.

If $F$ is a set of partial functions we let $\inv(F)$ denote
the set of all relations invariant under $F$, but this time $\inv(F)$
is in general not closed under pp-definitions, but under {\em
  quantifier-free primitive positive definitions}
(qfpp-definitions). As the terminology suggests, a relation $R$ has a
qfpp-definition over $\Gamma$ if $R$ is definable via a pp-formula
which does not make use of existential quantification. Such formulas
are sometimes simply called {\em conjunctive formulas}. We have the
following Galois connection.

\begin{theorem}[\cite{Gei68,romov1981}]
  \label{theorem:pgalois}
  Let $\Gamma$ and $\Gamma'$ be two constraint languages. Then 
  $\Gamma \subseteq \inv(\ppol(\Gamma'))$ if and only if  $\ppol(\Gamma') \subseteq
    \ppol(\Gamma)$.
\end{theorem}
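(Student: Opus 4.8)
The plan is to prove the biconditional directly from the definitions, recognizing it as the adjunction law of the $(\ppol,\inv)$ Galois correspondence specialized to $F = \ppol(\Gamma')$. The guiding observation is that both sides express the same \emph{incidence} condition between $\Gamma$ and $\ppol(\Gamma')$: that every partial operation in $\ppol(\Gamma')$ is a partial polymorphism of every relation in $\Gamma$. I would therefore prove the two inclusions symmetrically, in each case reducing the set inclusion to this single incidence statement.

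For the forward implication I would assume $\Gamma \subseteq \inv(\ppol(\Gamma'))$ and take an arbitrary $f \in \ppol(\Gamma')$; the goal is $f \in \ppol(\Gamma)$, i.e.\ that $f$ preserves each $R \in \Gamma$. Fixing such an $R$, the hypothesis gives $R \in \inv(\ppol(\Gamma'))$, which by the definition of $\inv$ for a set of partial operations says exactly that $R$ is invariant under every member of $\ppol(\Gamma')$, in particular under $f$. As $R$ was arbitrary this yields $f \in \ppol(\Gamma)$, and as $f$ was arbitrary it yields $\ppol(\Gamma') \subseteq \ppol(\Gamma)$.

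For the backward implication I would assume $\ppol(\Gamma') \subseteq \ppol(\Gamma)$ and take an arbitrary $R \in \Gamma$; the goal is $R \in \inv(\ppol(\Gamma'))$, i.e.\ that every $f \in \ppol(\Gamma')$ is a partial polymorphism of $R$. Fixing such an $f$, the hypothesis places $f$ in $\ppol(\Gamma)$, so $f$ preserves every relation of $\Gamma$ and hence $R$. As $f$ was arbitrary, $R$ is invariant under all of $\ppol(\Gamma')$, so $R \in \inv(\ppol(\Gamma'))$; and as $R$ was arbitrary, $\Gamma \subseteq \inv(\ppol(\Gamma'))$.

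I do not anticipate a deep obstacle: the biconditional is the formal adjunction half of the Galois connection, and the argument above is a direct verification using only the definition of partial polymorphism given earlier (with its ``undefined on some coordinate'' escape clause, which is inherited unchanged on both sides and so plays no distinguished role). The one point meriting care is to use the partial-operation notion of $\inv$ rather than the total one, so that invariance of $R$ under $\ppol(\Gamma')$ is read as preservation by each individual partial operation; getting that reading right is what makes the two unfoldings coincide. The further content behind the Geiger--Romov citation—that the closed sets $\inv(\ppol(\Gamma))$ coincide with the relations qfpp-definable from $\Gamma$—is what makes Theorem~\ref{theorem:pgalois} useful in applications but is not required for the stated equivalence.
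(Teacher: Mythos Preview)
Your argument is correct: both directions follow directly by unfolding the definitions of $\ppol$ and $\inv$ for partial operations, and the care you take to use the partial-operation notion of invariance is exactly the point that matters. Note, however, that the paper does not supply its own proof of this theorem---it is stated as a known result with a citation to Geiger and Romov---so there is nothing in the paper to compare your proof against; your direct verification of the Galois adjunction is the standard argument one would expect in those references.
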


As a shorthand we let $\inv(\ppol(\Gamma)) = \pcclone{\Gamma}$.
Jonsson et al.~\cite{jonsson2017} showed the following useful theorem.

\begin{theorem}\label{theorem:ppol_red}
Let $\Gamma$ and $\Gamma'$ be two finite constraint languages. If
$\ppol(\Gamma) \subseteq \ppol(\Gamma')$ then there exists a
polynomial-time many-one reduction from $\SAT(\Gamma')$ to
$\SAT(\Gamma)$ which maps an instance $(V,C)$ of $\SAT(\Gamma')$ to an
instance 
$(V',C')$ of $\SAT(\Gamma)$ where $|V'| \leq |V|$ and
$|C'| \leq c|C|$, 
where $c$ depends only on $\Gamma$ and $\Gamma'$.
\end{theorem}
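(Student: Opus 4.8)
The plan is to prove Theorem~\ref{theorem:ppol_red} by exploiting the Galois connection of Theorem~\ref{theorem:pgalois}: the hypothesis $\ppol(\Gamma) \subseteq \ppol(\Gamma')$ is equivalent (for finite languages) to $\Gamma' \subseteq \pcclone{\Gamma}$, i.e.\ every relation $R' \in \Gamma'$ has a quantifier-free primitive positive definition over $\Gamma \cup \{\eq\}$. So the first step is to fix, for each of the finitely many relations $R' \in \Gamma'$, such a qfpp-definition $R'(x_1,\ldots,x_k) \equiv R_1(\mathbf{x_1}) \wedge \cdots \wedge R_m(\mathbf{x_m})$ with each $R_i \in \Gamma \cup \{\eq\}$; since $\Gamma'$ is finite, the maximum number $c_0$ of conjuncts appearing in any of these definitions is a finite constant depending only on $\Gamma$ and $\Gamma'$.

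Next I would describe the reduction. Given an instance $(V,C)$ of $\SAT(\Gamma')$, replace each constraint $R'(v_{i_1},\ldots,v_{i_k}) \in C$ by the conjunction of constraints obtained by instantiating the chosen qfpp-definition of $R'$ with the variables $v_{i_1},\ldots,v_{i_k}$ substituted for $x_1,\ldots,x_k$. This produces constraints over $\Gamma \cup \{\eq\}$; the key point is that \emph{no new variables are introduced}, because qfpp-definitions have no existential quantifiers, so $V' \subseteq V$ and hence $|V'| \leq |V|$. The number of new constraints is at most $c_0|C|$. To eliminate the equality constraints and obtain an instance purely over $\Gamma$, I would contract variables: whenever an $\eq(v,v')$ constraint appears, identify $v$ and $v'$ throughout (a standard union-find step, polynomial time), which only decreases the variable count and the constraint count. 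This gives the final instance $(V',C')$ over $\Gamma$ with $|V'| \leq |V|$ and $|C'| \leq c|C|$ for $c = c_0$.

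For correctness, I would argue equisatisfiability directly from the semantics of qfpp-definitions. If $f : V \to D$ satisfies $(V,C)$, then for each original constraint $R'(v_{i_1},\ldots,v_{i_k})$ we have $(f(v_{i_1}),\ldots,f(v_{i_k})) \in R'$, and by the definition of the qfpp-formula this tuple satisfies every conjunct $R_i(\mathbf{x_i})$ under the induced assignment, so $f$ (restricted/contracted appropriately) satisfies $(V',C')$. Conversely, any satisfying assignment of $(V',C')$ lifts back to a satisfying assignment of the conjunctive formulas, hence each original tuple lies in $R'$. The variable-contraction step is sound because $\eq$ forces the identified variables to take equal values in every model.

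The main obstacle — really the only non-routine point — is making the bookkeeping of the constant $c$ precise and confirming that it depends only on $\Gamma,\Gamma'$ and not on the instance: this follows because $\Gamma'$ is finite, so only finitely many qfpp-definitions need to be fixed once and for all, and each has a bounded number of conjuncts. A secondary subtlety is ensuring the reduction runs in polynomial time even though the fixed qfpp-definitions are treated as constants; this is immediate since we only substitute variable names and then run union-find. Everything else is a direct unwinding of Theorem~\ref{theorem:pgalois} and the definition of qfpp-definability.
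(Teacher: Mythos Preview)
Your argument is correct and is exactly the standard proof: invoke Theorem~\ref{theorem:pgalois} to obtain a qfpp-definition over $\Gamma \cup \{\eq\}$ for each of the finitely many relations in $\Gamma'$, replace constraints accordingly (introducing no new variables), and eliminate $\eq$-constraints by identifying variables. Note, however, that the paper does not actually give its own proof of this theorem; it simply cites the result from Jonsson et al.~\cite{jonsson2017}, so there is nothing further to compare against.
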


\subsection{Maltsev Operations, Signatures and Compact Representations}
\label{sec:kedge}
A {\em Maltsev operation} over $D \supseteq \{0,1\}$ is a ternary
operation $\malt$ which for all $x,y \in D$ satisfies the two identities
$\malt(x,x,y) = y$ and $\malt(x,y,y) = x$. Before we can explain the
powerful, structural properties of relations invariant under Maltsev
operations, we need a few technical definitions from Bulatov and Dalmau~\cite{bulatov2006b}.
Let $t,t'$ be two $n$-ary tuples over $D$. We say that $(t,t')$ {\em witnesses} a tuple $(i,a,b) \in \{1,
\ldots, n\} \times D^2$ if $\pr_{1, \ldots, i-1}(t) = \pr_{1, \ldots,
 i-1}(t')$, $t[i] = a$, and $t'[i] = b$. 
The {\em signature} of an $n$-ary relation $R$ over $D$ is then
defined as  \[\sig(R) = \{(i,a,b) \in \{1, \ldots, n\} \times D^{2} \mid \exists
t,t' \in R \text{ such that } (t,t') \text{ witnesses } (i,a,b)\},\]
and we say that $R' \subseteq R$ is a {\em representation} of $R$ if $\sig(R) =
\sig(R')$.
If $R'$ is a representation of $R$ it is said to be {\em compact} if
$|R'| \leq 2|\sig(R)|$, and it is known that every relation invariant
under a Maltsev operation admits a compact
representation. Furthermore, we have the following theorem from
Bulatov and Dalmau, where we let $\cclone{R}_f$ denote the smallest
superset of $R$ preserved under the
operation $f$.

\begin{theorem}[\cite{bulatov2006b}] \label{theorem:kedge-rep}
  Let $\malt$ be a Maltsev operation over a finite domain, $R \in
  \inv(\{\malt\})$ a relation, and $R'$ a representation of $R$. Then
  $\cclone{R'}_{\malt} = R$.
\end{theorem}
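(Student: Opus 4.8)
The plan is to establish the two inclusions $S \subseteq R$ and $R \subseteq S$ separately, writing $S := \cclone{R'}_{\malt}$ for brevity. The first is essentially by definition: since $R' \subseteq R$ and $R \in \inv(\{\malt\})$, the relation $R$ is a superset of $R'$ that is preserved by $\malt$, and $S$ is the \emph{smallest} such superset, so $S \subseteq R$. Chaining $R' \subseteq S \subseteq R$ gives $\sig(R') \subseteq \sig(S) \subseteq \sig(R)$, and since $\sig(R') = \sig(R)$ by the assumption that $R'$ is a representation of $R$, we get $\sig(S) = \sig(R)$. This equality of signatures, together with closure of $S$ under $\malt$, is all that the rest of the argument uses; in particular, finiteness of the domain is not needed here (it is what makes the compact representation in the surrounding discussion small, not what makes the identity $\cclone{R'}_{\malt} = R$ hold).

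For $R \subseteq S$ I would fix an arbitrary $t \in R$ and prove, by induction on $i \in \{0,1,\ldots,n\}$ with $n = \ar(R)$, the statement: there exists $s \in S$ with $\pr_{1,\ldots,i}(s) = \pr_{1,\ldots,i}(t)$. Taking $i = n$ yields $\pr_{1,\ldots,n}(s) = \pr_{1,\ldots,n}(t)$, i.e.\ $s = t$, hence $t \in S$; since $t$ was arbitrary this gives $R \subseteq S$. (If $R = \emptyset$ then $R' = \emptyset$ and $S = \emptyset = R$, so assume $R \neq \emptyset$.) The base case $i = 0$ only requires $S$ to be nonempty, which holds because $R' \neq \emptyset$: otherwise $\sig(R') = \emptyset = \sig(R)$, impossible for a nonempty relation of arity at least one.

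For the inductive step, assume $s \in S$ agrees with $t$ on coordinates $1,\ldots,i$, and put $a := s[i+1]$, $b := t[i+1]$; if $a = b$ we are done, so assume $a \neq b$. Since $s \in S \subseteq R$ and $s,t$ agree on the first $i$ coordinates, the pair $(t,s)$ witnesses $(i+1,b,a)$, so $(i+1,b,a) \in \sig(R) = \sig(S)$, and we may choose $u, u' \in S$ witnessing it: $\pr_{1,\ldots,i}(u) = \pr_{1,\ldots,i}(u')$, $u[i+1] = b$, and $u'[i+1] = a$. Set $s'' := \malt(s, u', u)$, which lies in $S$ by closure under $\malt$. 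For each $j \leq i$ we have $u'[j] = u[j]$, so $s''[j] = \malt(s[j], u[j], u[j]) = s[j] = t[j]$ by the identity $\malt(x,y,y) = x$; and $s''[i+1] = \malt(a,a,b) = b = t[i+1]$ by the identity $\malt(x,x,y) = y$. Hence $s''$ agrees with $t$ on coordinates $1,\ldots,i+1$, completing the induction.

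The only point that calls for care is the choice of the componentwise combination $\malt(s, u', u)$: the order of the arguments is picked precisely so that the two Maltsev identities act at once — $\malt(x,y,y) = x$ freezes the already-correct prefix (using that $u$ and $u'$ coincide there), while $\malt(x,x,y) = y$ repairs coordinate $i+1$. Once this is spotted, each coordinate is a one-line verification, so I do not expect a genuine obstacle; the argument is a clean coordinate-by-coordinate "correction" that is the standard engine behind the few-subpowers machinery, and it will be reused (in a $k$-edge form) in Section~\ref{sec:higher}.
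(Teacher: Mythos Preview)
The paper does not give its own proof of this theorem; it is quoted from Bulatov and Dalmau~\cite{bulatov2006b} and used as a black box. Your argument is correct and is essentially the standard Bulatov--Dalmau proof: the inclusion $\cclone{R'}_\malt \subseteq R$ is immediate from closure of $R$ under $\malt$, and the reverse inclusion is obtained by the coordinate-by-coordinate ``prefix repair'' induction, using a witnessing pair from the shared signature together with the two Maltsev identities to correct one more coordinate at each step. Your remark that finiteness of the domain is not actually used in this identity (only in bounding the size of a compact representation) is also accurate.
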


Hence, relations invariant under Maltsev operations are
reconstructible from their representations.

\section{Maltsev Embeddings and Kernels of Linear Size}
\label{sec:embeddings}
In this section we give general upper bounds for kernelization of
NP-hard $\SAT$ problems based on algebraic conditions. We begin in
Section~\ref{sec:alg} by outlining the polynomial-time algorithm for Maltsev
constraints, and in Section~\ref{sec:upper} this algorithm is 
modified to construct linear-sized kernels for
certain $\SAT(\Gamma)$ problems.

\subsection{The Simple Algorithm for Maltsev Constraints}
\label{sec:alg}
At this stage the connection between Maltsev operations, compact
representations and tractability of Maltsev constraints might not be
immediate to the reader. We therefore give a brief
description of the simple algorithm for Maltsev constraints from
Bulatov and Dalmau~\cite{bulatov2006b}, which will henceforth simply
be referred to as the {\em Maltsev algorithm}. In a nutshell, the
algorithm operates as follows, where $\malt$ is a Maltsev operation
over a finite set $D$.

\begin{enumerate}
\item
  Let $(V, \{C_1, \ldots, C_m\})$ be an instance of
  $\CSP(\inv(\{\malt\}))$, and $S_0$ a compact representation of $D^{|V|}$.
\item
  For each $i \in \{1, \ldots, m\}$ compute a compact representation
  $S_i$ of the solution space of the instance $(V, \{C_1, \ldots, C_i\})$
  using $S_{i-1}$.
\item
  Answer yes if $S_m \neq \emptyset$ and no otherwise.
\end{enumerate}

The second step is accomplished by removing the tuples from
$\cclone{S_{i-1}}_{\malt}$ that are not compatible with the constraint
$C_i$. While the basic idea behind the Maltsev algorithm is not
complicated, the intricate details of the involved subprocedures are
outside the scope of this paper, and we refer the reader to Bulatov
and Dalmau~\cite{bulatov2006b}. 
We note that although the algorithm applies to infinite languages,
it is assumed that the relations in the input are specified by explicit
lists of tuples, i.e., the running time includes a factor proportional
to $\max |R|$ over relations $R$ used in the input. 

\begin{example} \label{ex:g1}
  Let $G = (D, \cdot)$ be a group over a finite set $D$, i.e., $\cdot$ is
  a binary, associative operator, $D$ is closed under $\cdot$ and
  contains an identity element $1_G$, and each element $x \in D$ has an
  inverse element $x^{-1} \in D$ such that $x \cdot x^{-1} = 1_G$. The ternary operation
  $s(x,y,z) = x \cdot y^{-1} \cdot z$ is referred to as the {\em coset
    generating operation} of $G$, and is Maltsev since
  $s(x,y,y) = x \cdot y^{-1} \cdot y = x$ and $s(x,x,y)
  = x \cdot x^{-1} \cdot y = y$. The problem $\CSP(\inv(\{s\}))$ is
  known to be tractable via the algorithm from Feder and
  Vardi~\cite{FV98}, but since $s$ is a Maltsev operation
  $\CSP(\inv(\{s\}))$ can equivalently well be solved via the Maltsev
  algorithm.
\end{example}

Another early class of tractable CSP problems was discovered via the
observation that if $R$ is preserved by a certain Maltsev operation,
it can be viewed as the solution space of a system of linear
equations.

\begin{example} \label{ex:g2}
  An {\em Abelian} group $G = (D, +)$ is a group where $+$ is
  commutative. Similar to Example~\ref{ex:g1} we can consider the
  coset generating operation $s(x,y,z) = x - y + z$, where $-y$ denotes the
  inverse of the element $y$. If $|D|$ is prime it is known that $R \in
  \inv(\{s\})$ if and only if $R$ is the solution space of a system of
  linear equations modulo $|D|$~\cite{jeavons1995}. Hence, the problem
  $\CSP(\inv(\{s\}))$ can efficiently be solved with Gaussian
  elimination, but can also be solved via the Maltsev algorithm.
\end{example}

\subsection{Upper Bounds Based on Maltsev Embeddings}
\label{sec:upper}
In this section we use a variation of the Maltsev algorithm to
obtain kernels of $\SAT$ problems. First, observe that
$\Gamma$ is never preserved by a Maltsev operation when $\SAT(\Gamma)$
is NP-hard~\cite{sch78}. However,
it is sometimes possible to find a related constraint
language $\hat \Gamma$ which is preserved by a Maltsev operation. This
will allow us to use the advantageous properties of relations
invariant under Maltsev operations in order to compute a kernel for
the original $\SAT(\Gamma)$ problem. We thus begin by making the
following definition.

\begin{definition} \label{def:embedding}
A constraint language $\Gamma$ over a domain $D$ admits an {\em
  embedding} over the constraint language $\hat \Gamma$ over $D' \supseteq D$ if there exists
a bijective function $h : \Gamma \rightarrow \hat \Gamma$ such that
$\ar(h(R)) = \ar(R)$ and $h(R) \cap D^{\ar(R)} = R$ for every $R \in \Gamma$.

\end{definition}

If $\hat \Gamma$ is preserved by a Maltsev operation then we say that
$\Gamma$ admits a {\em Maltsev embedding}. In general, we do not exclude the possibility that the domain $D'$ is
infinite. In this section, however, we will only be concerned with
finite domains, and therefore do not explicitly state this
assumption. 
If the bijection $h$ is efficiently computable and
there exists a polynomial $p$ such that $h(R)$ can be computed in
$O(p(|R|))$ time for each $R \in \Gamma$, then we say that $\Gamma$
admits a {\em polynomially bounded} embedding. In particular, an
embedding over a finite domain of any finite $\Gamma$ is
polynomially bounded. 

\begin{example} \label{ex:1}
  Recall from Section~\ref{sec:csp} that $\Roit$ consists of the three
  tuples $(0,0,1), (0,1,0)$, and $(1,0,0)$.  We claim that $\Roit$ has a
  Maltsev embedding over $\{0,1,2\}$. Let $\hroit = \{(x,y,z) \in
  \{0,1,2\}^3 \mid x + y + z = 1 \,(\bmod\, 3)\}$. By definition, $\hroit
  \cap \{0,1\}^3 = \Roit$, so all that remains to prove is that
  $\hroit$ is preserved by a Maltsev operation. But recall from
  Example~\ref{ex:g2} that a relation $R$ is the solution space of a system of linear equations
  over $D$, where $|D|$ is prime, if and only if $R$ is preserved by the 
  operation $x - y + z$ over $D$. Hence, $\hroit$ is indeed a Maltev
  embedding of $\Roit$. More generally, one can also prove that $\roik{k}$ has a Maltsev
  embedding to a finite domain $D$ where $|D| \geq k$ and $|D|$ is prime.
\end{example}

Given an instance $I = (\{x_1, \ldots, x_n\},C)$ of $\CSP(\Gamma)$ we
let $\rel{I} = \{(g(x_1), \ldots, g(x_n))
\mid g$ satisfies $I\}$.  If $\malt$ is a
Maltsev operation and $I = (V,\{C_1, \ldots, C_m\})$
an instance of $\CSP(\inv(\{\malt\}))$ we let $\seq(I) = (S_0, S_1 \ldots,
S_{m})$
 denote the
 compact representations of the relations $\rel{(V, \emptyset)},
 \rel{(V, \{C_1\})}, \ldots, \rel{(V, \{C_1,
 \ldots, C_m\})}$ computed by the Maltsev algorithm. We remark that
the ordering chosen in the sequence $\seq{I}$ does not influence the
upper bound in the forthcoming kernelization algorithm.

\begin{definition}
  Let $\malt$ be a Maltsev operation, $p$ a polynomial and let $\Delta \subseteq
  \inv(\{\malt\})$. We say that $\Delta$ and $\CSP(\Delta)$ have 
  {\em chain length $p$} if 
  $|\{\cclone{S_i}_{\malt} \mid i \in \{0, 1, \ldots, |C|\}\}| \leq p(|V|)$
  for each instance $I = (V,C)$ of $\CSP(\Delta)$,
where $\seq(I) = (S_0, S_1, 
  \ldots, S_{|C|})$.
\end{definition}

We now have everything in place to define our kernelization algorithm.

\begin{theorem} \label{thm:kernel}
  Let $\Gamma$ be a Boolean constraint language which admits a
  polynomially bounded Maltsev embedding $\hat \Gamma$ with chain length $p$. Then
  $\SAT(\Gamma)$ has a kernel with $O(p(|V|))$ constraints.
\end{theorem}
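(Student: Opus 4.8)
The plan is to run the Maltsev algorithm on the \emph{embedded} instance over the larger domain $D'$ and to use it as a redundancy filter: keep only those constraints whose embedded version genuinely shrinks the running solution space, and invoke the chain-length hypothesis to bound the number of survivors by $p(|V|)$.

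In detail, let $I = (V, \{C_1, \ldots, C_m\})$ be an instance of $\SAT(\Gamma)$ with $C_i = R_i(\mathbf{v}^i)$ and $n = |V|$, and let $h$ be the bijection witnessing the embedding. First I would form the embedded instance $\hat I = (V, \{\hat C_1, \ldots, \hat C_m\})$ with $\hat C_i = h(R_i)(\mathbf{v}^i)$; since the embedding is polynomially bounded each relation $h(R_i)$ can be written out explicitly in time polynomial in $|I|$, and as $\hat\Gamma \subseteq \inv(\{\malt\})$ the Maltsev algorithm applies to $\hat I$. The elementary fact I will use twice is that, because $D' \supseteq \{0,1\}$ and $h(R_i) \cap \{0,1\}^{\ar(R_i)} = R_i$, a map $g \colon V \to D'$ with image in $\{0,1\}$ satisfies $\hat I$ if and only if it satisfies $I$. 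Now run the Maltsev algorithm on $\hat I$ to get $\seq(\hat I) = (S_0, \ldots, S_m)$; by Theorem~\ref{theorem:kedge-rep} each $\cclone{S_i}_\malt$ is the exact partial solution space $\rel{(V, \{\hat C_1, \ldots, \hat C_i\})}$, so they form a non-increasing chain which by hypothesis has at most $p(n)$ distinct members and therefore drops strictly at most $p(n)-1$ times. Output $I' = (V, C')$ with $C' = \{\,C_i \mid \cclone{S_i}_\malt \subsetneq \cclone{S_{i-1}}_\malt\,\}$, so $|C'| \le p(n)-1 = O(p(n))$.

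For correctness I must show $I$ is satisfiable iff $I'$ is; one direction is immediate since $C' \subseteq C$. For the converse the core step is that the embedded sub-instance $\hat I' = (V, \{\hat C_i \mid C_i \in C'\})$ has the \emph{same} $D'$-solution space as $\hat I$. I would prove this by induction on $i$: the solution space of the embedded constraints among $\hat C_1, \ldots, \hat C_i$ that lie in $C'$ equals $\cclone{S_i}_\malt$ --- when the chain does not drop at step $i$ the constraint is skipped and nothing changes, and when it does drop we add $\hat C_i$ and intersect, which is precisely how $\cclone{S_i}_\malt$ arises from $\cclone{S_{i-1}}_\malt = \rel{(V, \{\hat C_1, \ldots, \hat C_{i-1}\})}$ (again via Theorem~\ref{theorem:kedge-rep}). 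At $i = m$ this yields $\rel{\hat I'} = \rel{\hat I}$. Then a Boolean solution $g$ of $I'$ also satisfies $\hat I'$ (as $R_i \subseteq h(R_i)$), hence $g \in \rel{\hat I'} = \rel{\hat I}$, so $g$ satisfies every $\hat C_i$; being Boolean, $g$ then satisfies every $C_i$ by the fact above, i.e., $g$ satisfies $I$.

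I expect the one genuinely delicate point to be this inductive identification $\rel{\hat I'} = \rel{\hat I}$, which relies on the Maltsev algorithm returning representations of the \emph{exact} partial solution spaces together with Theorem~\ref{theorem:kedge-rep} (a representation determines the relation it represents), so that ``the chain did not drop at step $i$'' is truly equivalent to ``$\hat C_i$ is entailed by the earlier kept constraints''. Everything else --- polynomial running time of the embedding step plus the Maltsev algorithm, the trivial direction of the equivalence, and the count $|C'| \le p(n)-1$ --- is routine.
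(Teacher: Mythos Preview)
Your proposal is correct and follows essentially the same approach as the paper: embed the instance, run the Maltsev algorithm to obtain the sequence $(S_0,\ldots,S_m)$, keep exactly those constraints where the chain strictly drops, and use the embedding property $h(R_i)\cap\{0,1\}^{\ar(R_i)}=R_i$ to transfer Boolean solutions back and forth. The only cosmetic difference is that the paper makes the polynomial-time redundancy test explicit (check whether every $t\in S_{i-1}$ projects into $\hat R_i$) and argues directly that this test is equivalent to $\cclone{S_{i-1}}_\malt=\cclone{S_i}_\malt$, whereas you phrase the selection criterion abstractly as ``the chain drops'' and establish $\rel{\hat I'}=\rel{\hat I}$ by induction; both routes yield the same kernel and the same bound.
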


\begin{proof}
  Let $\malt \in \pol(\hat \Gamma)$ denote the Maltsev operation
witnessing the embedding $\hat \Gamma$. Given an instance $I = (V,C)$
of $\SAT(\Gamma)$ we can obtain an instance $I' = (V,C')$ of
$\CSP(\hat \Gamma)$ by replacing each constraint $R_i({\bf x_i})$ in
$C$ by $\hat R_i({\bf x_i})$. We arbitrarily order the constraints as
$C'=(C_1, \ldots, C_m)$ where $m=|C'|$. We then iteratively compute
the corresponding sequence $\seq(I') = (S_0, S_1, \ldots,
S_{|C'|})$. This can be done in polynomial time with respect to the
size of $I$ via the same procedure as the Maltsev
algorithm. For each $i \in \{1, \ldots, m\}$ we then do the
following.

  \begin{enumerate}
  \item
    Let the $i$th constraint be $C_i = \hat R_i(x_{i_1}, \ldots, x_{i_r})$ with $\ar(R_i)=r$.
  \item
    For each $t \in S_{i-1}$ determine whether $\pr_{i_1, \ldots, i_r}(t) \in \hat R_i$.
  \item
    If yes, then remove the constraint $C_i$, otherwise keep it.
  \end{enumerate}

  This can be done in polynomial time with respect to the size of the
  instance $I'$, since (1) $|S_{i-1}|$ is bounded by a polynomial in
  $|V|$ and (2) the test $\pr_{i_1, \ldots, i_r}(t) \in \hat R_i$ can naively be checked in linear time with
    respect to \smash{$|\hat R_i|$}. We claim that the procedure outlined above will correctly detect
  whether the constraint $C_i$ is redundant or not with respect to
  $\cclone{S_{i-1}}_{\malt}$, i.e., whether $\cclone{S_{i-1}}_{\malt} =
  \cclone{S_i}_{\malt}$. 
  First, observe that if there exists $t \in
  S_{i-1}$ such that \smash{$\pr_{i_1, \ldots, i_r}(t) \notin \hat R_i$},
  then the constraint is clearly not redundant. Hence, assume that
  $\pr_{i_1, \ldots, i_r}(t) \in \smash{\hat R_i}$ for every $t \in S_{i-1}$. 
  Then $S_{i-1} \subseteq \cclone{S_i}_{\malt}$, hence also 
  $\cclone{S_{i-1}}_{\malt} \subseteq \cclone{S_i}_{\malt}$. 
  On the other hand, $\cclone{S_i}_{\malt} \subseteq \cclone{S_{i-1}}_{\malt}$ 
  holds trivially.
  Therefore, equality must hold.

  Let $I'' = (V, C'')$ denote the resulting instance. Since
  $\CSP(\inv(\{\malt\}))$ has chain length $p$ it follows that (1) the
  sequence $\cclone{S_0}_{\malt}, \cclone{S_1}_{\malt}, \ldots, \cclone{S_{|C'|}}_{\malt}$
  contains at most $p(|V|)$ distinct elements, hence $|C''|\leq p(|V|)$,
  and (2) $\rel{I'} = \rel{I''}$. Clearly, it also holds that 
  $\rel{I} = (\rel{I'} \cap \{0,1\}^{|V|}) = (\rel{I''} \cap \{0,1\}^{|V|})$. 
  Hence, we can safely transform $I''$ to an instance $I^*$ of $\SAT(\Gamma)$ by
  replacing each constraint $\hat R_i({\bf x_i})$ with $R_i({\bf x_i})$. 
  Then $I^*$ is an instance of $\SAT(\Gamma)$ with at most $p(|V|)$ constraints,
  such that $\rel{I}=\rel{I^*}$.
  In particular, $I^*$ has a solution if and only if $I$ has a solution.
\end{proof}

Clearly, the above algorithm also works for finite-domain $\CSP$. As
with the Maltsev algorithm, the procedure runs in polynomial time with
respect to the total size of the instance. For languages with bounded
arity this simply means time polynomial in $n$, but it is worth noting
that if $\Gamma$ is infinite but somehow concisely encoded, then we
cannot necessarily check whether an $n$-ary constraint is redundant in
time polynomial in $n$. All that remains to be proven now is that there
actually exist Maltsev embeddings with bounded chain length.

\begin{definition}
  Let $f$ be an $n$-ary operation over $D$. A binary relation $R \in
  \inv(\{f\})$ is said to be a {\em congruence} of $f$ if it is an
  equivalence relation over $D$.
\end{definition}

Before we prove Theorem~\ref{thm:chainlength}, we need two subsidiary lemmas.

\begin{lemma} \label{lemma:sig}
  Let $\malt$ be a Maltsev operation over $D$ and $I$ an
  instance of $\CSP(\inv(\{\malt\}))$. Then $\sig(S_{i-1}) \supseteq
  \sig(S_{i})$ for each $S_{i-1}$ in $\seq(I)$.
 \end{lemma}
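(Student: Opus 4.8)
The plan is to reduce the claim to a monotonicity property of the signature operator under relation inclusion. Recall that in the sequence $\seq(I) = (S_0, S_1, \ldots, S_{|C|})$ associated with $I = (V, \{C_1, \ldots, C_m\})$, the set $S_i$ is by definition a compact representation of the relation $R_i := \rel{(V, \{C_1, \ldots, C_i\})}$, so that $\sig(S_i) = \sig(R_i)$ directly from the definition of a representation. Hence it suffices to prove that $\sig(R_i) \subseteq \sig(R_{i-1})$ for each $i$.

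First I would observe that $R_i \subseteq R_{i-1}$: every assignment $g : V \to D$ satisfying the instance $(V, \{C_1, \ldots, C_i\})$ also satisfies $(V, \{C_1, \ldots, C_{i-1}\})$, since the latter instance has a subset of the constraints; therefore every tuple $(g(x_1), \ldots, g(x_n))$ occurring in $R_i$ also occurs in $R_{i-1}$. Next I would establish the monotonicity of $\sig$: if $R' \subseteq R$ are $n$-ary relations over $D$, then $\sig(R') \subseteq \sig(R)$. Indeed, take $(j, a, b) \in \sig(R')$, witnessed by tuples $t, t' \in R'$ with $\pr_{1, \ldots, j-1}(t) = \pr_{1, \ldots, j-1}(t')$, $t[j] = a$, and $t'[j] = b$. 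Since $t, t' \in R$ as well, the same pair $(t,t')$ witnesses $(j,a,b)$ in $R$, so $(j,a,b) \in \sig(R)$.

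Combining these two observations with $\sig(S_i) = \sig(R_i)$ and $\sig(S_{i-1}) = \sig(R_{i-1})$ yields $\sig(S_i) = \sig(R_i) \subseteq \sig(R_{i-1}) = \sig(S_{i-1})$, which is the desired inclusion. There is no substantial obstacle here: the argument is a routine monotonicity chase, and the only points worth keeping in mind are that the statement refers to the fixed constraint ordering used in forming $\seq(I)$, and that the Maltsev hypothesis on $\malt$ is invoked only to guarantee that the compact representations $S_i$ exist (cf.\ Theorem~\ref{theorem:kedge-rep}) — it plays no further role in the signature inequality itself.
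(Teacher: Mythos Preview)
Your proof is correct and follows essentially the same route as the paper's: both arguments reduce to the monotonicity of $\sig$ under the inclusion $R_i \subseteq R_{i-1}$, together with the fact that $\sig(S_i) = \sig(R_i)$ because $S_i$ is a representation of $R_i$. The only cosmetic difference is that the paper phrases the intermediate step via $\cclone{S_i}_{\malt}$ (which equals $R_i$ by Theorem~\ref{theorem:kedge-rep}) rather than referring to $R_i$ directly; your formulation is arguably cleaner.
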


\begin{proof}
  Let $I = (V,C)$, $(j,a,b) \in \sig(S_i)$, where $j \in \{1,
  \ldots, |V|\}$ and $a,b \in D$. Then there exists $t,t' \in S_i$ 
  such that $(t,t')$ witnesses $(j,a,b)$, i.e., 
  $\pr_{1, \ldots, j - 1}(t) = \pr_{1, \ldots, j - 1}(t')$, and $t[j] = a$, 
  $t'[j] = b$. Since 
  $\cclone{S_{i-1}}_{\malt} \supseteq \cclone{S_{i}}_{\malt} \supseteq S_i$, 
  it follows that $t,t' \in \cclone{S_{i-1}}_{\malt}$, and hence also 
  that $(j,a,b) \in \sig(\cclone{S_{i-1}}_{\malt})$. But since $S_{i-1}$ is a
  representation of $\cclone{S_{i-1}}_{\malt}$, 
  $\sig(S_{i-1}) = \sig(\cclone{S_{i-1}}_{\malt})$, 
  from which we infer that $(j,a,b) \in \sig(S_{i-1})$.
\end{proof}

\begin{lemma} \label{lemma:eqclass}
  Let $\malt$ be a Maltsev operation over a finite domain $D$, and 
  $R \in \inv(\{\malt\})$. For every $i \in \{1,\ldots,\ar(R)\}$, the
  tuples $(i,a,b)$ in $\sig(R)$ define an equivalence relation on 
  $\pr_i(R) \subseteq D$.
\end{lemma}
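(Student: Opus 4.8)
Fix a coordinate $i$ and consider the binary relation $\sim$ on $\pr_i(R)$ given by $a \sim b$ iff $(i,a,b) \in \sig(R)$. Note first that $\sim$ really is a relation on $\pr_i(R)$: if $(i,a,b)\in\sig(R)$ then by definition there are witnessing tuples $t,t'\in R$ with $t[i]=a$ and $t'[i]=b$, so $a,b\in\pr_i(R)$. The task is then to check the three defining properties of an equivalence relation.

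\emph{Reflexivity and symmetry} are immediate and use nothing beyond the definition of the signature. For $a\in\pr_i(R)$, pick $t\in R$ with $t[i]=a$; then $(t,t)$ witnesses $(i,a,a)$, so $a\sim a$. For symmetry, if $(t,t')$ witnesses $(i,a,b)$ — meaning $\pr_{1,\dots,i-1}(t)=\pr_{1,\dots,i-1}(t')$, $t[i]=a$, $t'[i]=b$ — then $(t',t)$ witnesses $(i,b,a)$.

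\emph{Transitivity} is the one place where the Maltsev operation is needed, and it is the only real content of the lemma. Suppose $a\sim b$ and $b\sim c$, witnessed by $t_1,t_1'\in R$ (with agreement on the first $i-1$ coordinates, $t_1[i]=a$, $t_1'[i]=b$) and by $t_2,t_2'\in R$ (with agreement on the first $i-1$ coordinates, $t_2[i]=b$, $t_2'[i]=c$). The plan is to set $u=\malt(t_1,t_1',t_2)$, which lies in $R$ since $R\in\inv(\{\malt\})$, and then to show $(u,t_2')$ witnesses $(i,a,c)$. On each coordinate $j<i$ we have $t_1[j]=t_1'[j]$, so $u[j]=\malt(t_1[j],t_1[j],t_2[j])=t_2[j]=t_2'[j]$ by the identity $\malt(x,x,y)=y$; hence $u$ and $t_2'$ agree on the first $i-1$ coordinates. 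At coordinate $i$, $u[i]=\malt(a,b,b)=a$ by the identity $\malt(x,y,y)=x$, while $t_2'[i]=c$. So $(u,t_2')$ witnesses $(i,a,c)$ and $a\sim c$.

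The conceptual point — and the only step requiring thought — is choosing the right argument order for $\malt$: the first two arguments $t_1,t_1'$ are the pair that "differs at $i$ but agrees before $i$," so that the Maltsev law collapses them to $t_2$ on the shared prefix while carrying the value $a$ through at coordinate $i$, after which $t_2'$ supplies the value $c$ with a matching prefix for free. No finiteness of $D$ is actually used; it is stated only for consistency with the surrounding development.
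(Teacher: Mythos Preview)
Your proof is correct and follows essentially the same approach as the paper: reflexivity and symmetry are dispatched directly from the definition, and transitivity is obtained by applying $\malt$ to three tuples of $R$ so that the Maltsev identities collapse the prefix and produce the needed value at coordinate $i$. The only cosmetic difference is that the paper proves the variant ``$a\sim b$ and $a\sim c$ imply $b\sim c$'' (which combined with symmetry yields transitivity), whereas you prove the standard form $a\sim b$, $b\sim c \Rightarrow a\sim c$ directly; the Maltsev construction is the same idea with the arguments permuted.
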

\begin{proof} Define the relation $a \sim b$ if and only if $(i,a,b) \in \sig(R)$. 
  Note that $(i,a,a) \in \sig(R)$ if and only if $a \in \pr_i(R)$,
  and that $(i,a,b) \notin \sig(R)$ for any $b$ if $a \notin \pr_i(R)$. 
  Also note that $\sim$ is symmetric by its definition. It remains to show
  transitivity. Let $(i,a,b) \in \sig(R)$ be witnessed by $(t_a, t_b)$ and
  $(i,a,c) \in \sig(R)$ be witnessed by $(t_a', t_c')$. We claim that
  $t_c:=\malt(t_a,t_a',t_c') \in R$ is a tuple such that $(t_b, t_c)$
  witnesses $(i,b,c) \in \sig(R)$. Indeed, for every $j<i$ we have
  $\malt(t_a[j], t_a'[j], t_c'[j]) = \malt(t_a[j], t_a'[j], t_a'[j]) =
  t_a[j]$, whereas $\malt(t_a[i], t_a'[i], t_c'[i])=(a,a,c)=c$. Since
  $t_a[j]=t_b[j]$ for every $j<i$, it follows that $(t_b,t_c)$ witnesses
  $(i,b,c) \in \sig(R)$. Hence $\sim$ is an equivalence relation on $\pr_i(R)$.
\end{proof}

\begin{restatable}{theorem}{thmchainlength} \label{thm:chainlength}
  Let $\malt$ be a Maltsev operation over a finite domain $D$. Then 
  $\CSP(\inv(\{\malt\}))$ has chain length $O(|D||V|)$.
\end{restatable}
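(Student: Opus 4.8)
The plan is to bound the number of distinct sets $\cclone{S_i}_{\malt}$ that can appear in the sequence $\seq(I) = (S_0, S_1, \ldots, S_{|C|})$, by tracking how the signature evolves as constraints are added. By Lemma~\ref{lemma:sig}, the signatures form a decreasing chain $\sig(S_0) \supseteq \sig(S_1) \supseteq \cdots \supseteq \sig(S_{|C|})$, and since each $S_i$ is a representation of $\cclone{S_i}_{\malt}$, Theorem~\ref{theorem:kedge-rep} tells us that $\cclone{S_i}_{\malt}$ is completely determined by $\sig(S_i)$. So it suffices to bound the number of distinct signatures appearing along this chain.

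A naive count gives $|\sig(R)| \leq |V| \cdot |D|^2$, so a strictly decreasing chain of signatures has length at most $|V| \cdot |D|^2$; but this only yields chain length $O(|D|^2 |V|)$, not the claimed $O(|D||V|)$. The improvement comes from Lemma~\ref{lemma:eqclass}: for each coordinate $i$, the pairs $(i,a,b)$ in $\sig(R)$ form an equivalence relation on $\pr_i(R) \subseteq D$. The key observation is that an equivalence relation on a subset of a $|D|$-element set can only be \emph{refined} a bounded number of times: each refinement step either removes an element from the underlying set or splits some block, and the total number of such steps before reaching the discrete partition (all singletons) is at most $|D| - 1$ per coordinate (a standard fact — the lattice of partitions of an $m$-set has height $m-1$, and removing elements only shortens things). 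Since the signature restricted to coordinate $i$ can only become a refinement of its predecessor as $i$ increases through the sequence (this needs to be checked: shrinking $S_i$ can only remove witnessing pairs, and the resulting relation on the possibly-smaller $\pr_i$ is again an equivalence relation that is ``finer'' in the appropriate sense), the signature at coordinate $i$ changes at most $O(|D|)$ times over the whole sequence. Summing over the $|V|$ coordinates, the total number of changes to $\sig(S_\bullet)$ is $O(|D||V|)$, hence there are at most $O(|D||V|)$ distinct signatures, hence at most $O(|D||V|)$ distinct sets $\cclone{S_i}_{\malt}$.

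Concretely, the steps I would carry out are: (1) invoke Lemma~\ref{lemma:sig} for the monotone shrinking of signatures; (2) invoke Theorem~\ref{theorem:kedge-rep} to reduce counting distinct $\cclone{S_i}_{\malt}$ to counting distinct $\sig(S_i)$; (3) for a fixed coordinate $i$, use Lemma~\ref{lemma:eqclass} to view the sequence of coordinate-$i$ slices of the signatures as a sequence of equivalence relations, each a ``refinement'' of the last in the sense that both the ground set $\pr_i(S_j)$ shrinks (monotonically, by Lemma~\ref{lemma:sig}) and the partition gets finer; (4) bound the number of distinct such equivalence relations by $O(|D|)$, using the height bound on the partition lattice together with the at-most-$|D|$ shrinkings of the ground set; (5) sum over $i \in \{1,\ldots,|V|\}$ to get that the whole signature takes at most $O(|D||V|)$ distinct values along the chain; (6) conclude chain length $O(|D||V|)$.

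The main obstacle I anticipate is step~(3)–(4): making precise the sense in which the coordinate-$i$ signature slice becomes ``finer'' as we move along the sequence, and getting a clean combined bound on (ground-set shrinkings) plus (partition refinements). The subtlety is that when $S_j$ shrinks to $S_{j+1}$, at coordinate $i$ we may simultaneously lose elements of $\pr_i$ \emph{and} split blocks, so one wants a single monovariant — e.g. the quantity $|\pr_i(S_j)| - (\text{number of blocks of the partition on }\pr_i(S_j))$, which is at most $|D|-1$, nonnegative, and strictly decreases at every step where the coordinate-$i$ signature actually changes. Verifying that this quantity is indeed strictly monotone (never stays equal while the relation changes, never increases) is the crux; once that monovariant is in hand, the summation and the conclusion are routine.
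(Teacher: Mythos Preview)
Your proposal is correct and follows essentially the same approach as the paper: invoke Lemma~\ref{lemma:sig} for monotonicity of signatures, argue that equal signatures force equal $\cclone{S_i}_\malt$, then use Lemma~\ref{lemma:eqclass} to bound per-coordinate signature changes by $O(|D|)$ and sum over the $|V|$ coordinates. The only (minor) difference is that the paper counts ``projection shrinks'' and ``partition refines'' separately to get a bound of $2|V||D|$, whereas your single monovariant $|\pr_i(S_j)| - (\text{number of blocks})$ handles both at once and gives the slightly tighter $|V|(|D|-1)$; this is a cosmetic improvement within the same argument.
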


\begin{proof}
  Let $I=(V,C)$ be an instance of $\CSP(\inv(\{\malt\}))$, with
  $|V|=n$ and $|C|=m$, and let $\seq(I) = (S_0, S_1, \ldots, S_{m})$
  be the sequence of compact representations computed by the Maltsev algorithm.
  By Lemma~\ref{lemma:sig}, $\sig(S_{i+1}) \subseteq \sig(S_i)$ for every $i<m$, 
  and by Lemma~\ref{lemma:eqclass}, the sets $(j,a,b) \in \sig(S_i)$ induce
  an equivalence relation on $\pr_j(\cclone{S_i}_\malt)$ for every $i \leq m$, $j \leq n$. 
  (Lemma~\ref{lemma:eqclass} applies here since $\sig(S_i)=\sig(\cclone{S_i}_\malt)$
  for every $S_i$ in $\seq(I)$, and $\cclone{S_i}_\malt \in \inv(\{\malt\})$.)
  We also note that if $\sig(S_{i+1})=\sig(S_i)$, then
  $\cclone{S_i}_{\malt} = \cclone{S_{i+1}}_{\malt}$ since $S_{i+1}$ is a
  compact representation of $\cclone{S_i}_\malt$. 
  Hence, we need to bound the number of times that 
  $\sig(S_{i+1}) \subset \sig(S_i)$ can hold. 
  Now note that whenever $\sig(S_{i+1}) \subset \sig(S_i)$,
  then either $\pr_j(\cclone{S_i}_\malt) \subset \pr_j(\cclone{S_{i+1}}_\malt)$  
  for some $j$, or the equivalence relation induced by tuples 
  $(j,a,b) \in \sig(S_{i+1})$ is a refinement of that induced
  by tuples $(j,a,b) \in \sig(S_i)$ for some $j$. Both of these events
  can only occur $|D|-1$ times for every position $j$ (unless $S_m=\emptyset$).
  Hence the chain length is bounded by $2|V||D|$. 
\end{proof}

This bound can be slightly improved for a particular class of Maltsev
operations. Recall from Example~\ref{ex:g1} that $s(x,y,z) = x \cdot
y^{-1} \cdot z$ is the coset generating operation of a group $G = (D, \cdot)$.

\begin{lemma} \label{lemma:coset}
  Let $G = (D,\cdot)$ be a finite group and let $s$ be its coset
  generating operation. Then $\CSP(\inv(\{s\}))$ has chain length
  $O(|V|\log |D|)$. 
\end{lemma}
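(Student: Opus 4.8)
The plan is to refine the counting argument in the proof of Theorem~\ref{thm:chainlength} by exploiting the extra algebraic structure available when $\malt = s$ is a coset generating operation. Recall that when $R \in \inv(\{s\})$ for $s(x,y,z) = x \cdot y^{-1} \cdot z$, the relation $R$ is either empty or is a coset of a subgroup of the direct power $(D^n,\cdot)$; this is the classical fact underlying Example~\ref{ex:g1} via~\cite{FV98}. In the sequence $\seq(I) = (S_0, S_1, \ldots, S_m)$, each $\cclone{S_i}_s$ is therefore (if nonempty) a coset $g_i H_i$ of a subgroup $H_i \leq D^n$, and the inclusions $\cclone{S_0}_s \supseteq \cclone{S_1}_s \supseteq \cdots$ from the proof of Theorem~\ref{thm:kernel} force $H_0 \geq H_1 \geq \cdots$, a descending chain of subgroups of $D^n$.

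The key step is then to bound the length of a strictly descending chain of subgroups of $D^n$. First I would note that $|D^n| = |D|^n$, and that in any finite group a strictly descending chain of subgroups $G = K_0 > K_1 > \cdots > K_\ell = \{e\}$ has length $\ell \leq \log_2 |G|$, since each proper containment $K_{j+1} < K_j$ satisfies $[K_j : K_{j+1}] \geq 2$ and the indices multiply to $|G|$. Applying this with $G = D^n$ gives $\ell \leq \log_2(|D|^n) = n \log_2|D| = O(|V|\log|D|)$. Since whenever $\cclone{S_{i}}_s \neq \cclone{S_{i+1}}_s$ we have (by the inclusions above) a strict containment, hence a strict drop in the associated subgroup chain, the number of distinct elements among $\cclone{S_0}_s, \ldots, \cclone{S_m}_s$ is at most this chain length plus one (plus possibly one more step to reach $\emptyset$). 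This is exactly the chain length bound $O(|V|\log|D|)$.

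The main obstacle I anticipate is making the coset structure fully rigorous in the present framework: one must verify that $\inv(\{s\})$ for a coset generating operation $s$ of a possibly non-Abelian group $G$ consists precisely of cosets of subgroups of direct powers of $G$ (rather than merely of Abelian groups, as in Example~\ref{ex:g2}), and that the monotone sequence $\cclone{S_i}_s$ genuinely yields a \emph{nested} chain of such cosets with a common "origin" so that the subgroups themselves form a chain. The cleanest way to handle this is to translate each $\cclone{S_i}_s$ by a fixed solution (if one exists) so that all of them contain the identity of $D^n$ and are therefore honest subgroups; if no solution exists the chain collapses to $\emptyset$ and the bound holds trivially. Once the chain of subgroups is in hand, the index-doubling estimate is routine and the stated $O(|V|\log|D|)$ bound follows immediately, improving on the $2|V||D|$ bound of Theorem~\ref{thm:chainlength} whenever $|D|$ is large.
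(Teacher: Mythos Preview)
Your proposal is correct and follows essentially the same route as the paper: both arguments use that relations in $\inv(\{s\})$ are cosets of subgroups of $D^n$, obtain a descending chain of subgroups from the sequence $\cclone{S_i}_s$, and bound its length by $n\log_2|D|$ via Lagrange's theorem (the paper states this as ``$|\cclone{S_i}_s|$ divides $|\cclone{S_{i-1}}_s|$'', which is your index-at-least-$2$ observation in disguise). Your extra care in translating by a fixed solution to turn cosets into genuine subgroups is a point the paper glosses over, but otherwise the arguments coincide.
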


\begin{proof}
  Let $I = (V, C)$ be an instance of $\CSP(\inv(\{s\}))$, where $|V| =
  n$ and $|C| = m$. Let $\seq(I) = (S_0, S_1, \ldots, S_{m})$ be the
  corresponding sequence.  First observe that $S_0$ is a compact
  representation of $D^{n}$ and that $(D^{n}, \cdot)$ is nothing else
  than the $n$th direct power of $G$. It is well-known that $R$ is a
  coset of a subgroup of $(D^{n}, \cdot)$ if and only if $s$ preserves
  $R$~\cite{dalmau2003}. In particular, this implies that $S_1$ is a compact
  representation of a subgroup of $(D^n, \cdot)$, and more generally
  that each $S_i$ is a compact representation of a subgroup of
  $\cclone{S_{i-1}}_{s}$. An application of Lagrange's theorem reveals
  that $|\cclone{S_{i}}_s|$ divides $|\cclone{S_{i-1}}_s|$, which
  implies that the sequence $\cclone{S_0}_s, \cclone{S_1}_{s}, \ldots, \cclone{S_{m}}_s$
  contains at most $n \log_2 |D|+1$ distinct elements.
\end{proof}

Note that if the domain $|D|$ is prime in Lemma~\ref{lemma:coset} then
the proof can be strengthened to obtain the bound $O(|V|)$. As an application of this
result, let us briefly return to Example~\ref{ex:1}, where we
demonstrated that $\roik{k}$ had a Maltsev embedding over the coset
generating operation of an Abelian
group $(D,+)$ where $|D|$ is prime. Combining Theorem~\ref{thm:kernel} and
Lemma~\ref{lemma:coset} we therefore conclude that $\SAT(\{\roik{k}\})$ has a kernel with $O(|V|)$
constraints. More generally, we may interpret the results in this
section as follows. If $\Gamma$ admits a Maltsev embedding over the
coset generating operation of an Abelian group $(D,+)$, where $|D|$ is
prime, then we obtain kernels with $O(|V|)$ constraints, closely
mirroring the results from Jansen and Pieterse~\cite{JansenP16MFCS}.
This is in turn a special case of constraint languages admitting
Maltsev embeddings over coset generating operations over arbitrary
groups, where we obtain kernels with $O(|V|\log |D|)$ constraints. It
is not hard to find examples of groups whose coset generating
operations cannot be represented by the aforementioned Abelian
groups. One such example is the group $A_n$ of all even permutations
over $\{1, \ldots, n\}$ for $n \geq 3$. Last, in the most general
case, where we obtain kernels with $O(|V||D|)$ constraints, we have
embeddings over arbitrary Maltsev operations. Furthermore, it is known that a
Maltsev operation $\malt$ over $D$ is the coset generating operation
of a group $(D, \cdot)$ if and only if $\malt(\malt(x,y,z), z, u) =
\malt(x,y,u)$, $\malt(u, z, \malt(z,y,x)) = \malt(u,y,x)$ for all
$x,y,z,u \in D$~\cite{dalmau2003}. Hence, any Maltsev operation which
do not satisfy any of these two identities cannot be viewed as a coset
generating operation of some group.

\section{Kernels of Polynomial Size}
\label{sec:higher}
Section~\ref{sec:upper} gives a description of $\SAT$
problems admitting kernels with $O(n)$ constraints. In this section we
study two generalizations which provide kernels
with $O(n^c)$ constraints for $c > 1$.

\subsection{Moving Beyond Maltsev: $k$-Edge Embeddings}
\label{sec:edge}
It is known that Maltsev operations are particular examples of a more
general class of operations called {\em $k$-edge
  operations}. Following Idziak et al.~\cite{Idziak2010b} we define a
$k$-edge operation $e$ as a $(k+1)$-ary operation satisfying 
$e(x, x, y, y, y, \ldots, y, y) = e(x, y, x, y, y, \ldots, y, y) = y$
and for each $i \in \{4, \ldots, k+1\}$, $e(y,\ldots,y, x, y, \ldots,
y) = y$, where $x$ occurs in position $i$.
Note that a Maltsev operation is nothing else than a 2-edge operation
with the first and second arguments permuted. A {\em
$k$-edge embedding} is then defined analogously to the concept of a
Maltsev embedding, with the distinction that the embedding $\hat \Gamma$
must be preserved by a $k$-edge operation for some $k \geq 2$. It is
known that $k$-edge operations satisfy many of the advantageous
properties of Maltsev operations, and the basic definitions concerning
signatures and representations are similar. 
Before the proof of Theorem~\ref{thm:kedge} we need the following theorem
from Idziak et al.~\cite{Idziak2010b}.

\begin{theorem}~\cite{Idziak2010b} \label{theorem:edge_terms}
If $e$ is a $k$-edge operation over $D$ then $\clone{\{e\}}$ also
contains a binary operation $d$ and a ternary operation $p$ satisfying 
\[p(x, y, y) = x, p(x, x, y) = d(x,y), d(x,d(x,y)) = d(x,y),\]
and a $k$-ary operation $s$, satisfying
$s(x, y, y, y, \ldots, y, y) = d(y,x)$ and for each $i \in
\{2,\ldots,k\}$,
$s(y,y,\ldots, y,x,y,\ldots,y) = y$, where $x$ appears in position $i$.
\end{theorem}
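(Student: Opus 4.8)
The plan is to realize $d$, $p$, and $s$ as elements of $\clone{\{e\}}$, i.e.\ as compositions of $e$ with projections, and to verify each displayed identity by substituting into the defining identities of the edge operation. I use one fact constantly: an edge operation is idempotent (put $x:=y$ in $e(x,x,y,\dots,y)=y$), hence so is every term built from it, so identities such as $d(x,x)=x$ come for free.

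I would begin with $s$, whose identities are the ones most directly matched by the edge identities. Consider the $k$-ary specialization
\[
 s_0(x_1,\dots,x_k):=e(x_2,\,x_2,\,x_1,\,x_3,\,x_4,\dots,x_k),
\]
which routes $x_2$ to the first two arguments of $e$, $x_1$ to the third, and $x_3,\dots,x_k$ in order to the remaining $k-2$ arguments. If $x$ occupies position $i\in\{2,\dots,k\}$ of $s_0$, then the tuple handed to $e$ has its ``special'' value exactly in coordinates $\{1,2\}$ (when $i=2$) or in the single coordinate $i+1\ge 4$ (when $i\ge 3$); in either case one of the edge identities forces the output to equal the repeated variable, so $s_0(y,\dots,x,\dots,y)=y$ for every such $i$. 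On the other hand $s_0(x,y,\dots,y)=e(y,y,x,y,\dots,y)$, which is pinned down by no single edge identity; write this binary term as $d_0(y,x)$, so $d_0(a,b):=e(a,a,b,a,\dots,a)$ with $d_0(x,x)=x$. Thus $s_0$ already delivers the correct ``near-unanimity in positions $2,\dots,k$'' behaviour and realizes $s_0(x,y,\dots,y)=d_0(y,x)$; what remains is to understand and repair $d_0$ and to produce $p$.

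The crux is the retraction identity $d(x,d(x,y))=d(x,y)$. It fails for $d_0$ — nothing forces $e(x,x,e(x,x,y,x,\dots,x),x,\dots,x)=e(x,x,y,x,\dots,x)$, and in a free edge algebra this identity does not hold — and one cannot always take $d$ to be a projection: $d:=\pi_2$ would make $p$ (with $p(x,x,y)=y$ and $p(x,y,y)=x$) a Maltsev operation, which a $(k+1)$-edge operation need not supply (a two-element distributive lattice has an edge term but no Maltsev term); dually $d:=\pi_1$ would force $s$ to be a genuine $k$-ary near-unanimity term, which likewise need not exist (already for $k=2$ there is no binary near-unanimity term on a non-trivial domain). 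So $d$ must be a genuinely non-projective binary term, obtained by nesting several copies of $e$ so that retraction becomes structural; equivalently, one seeks a ternary term $p$ over $e$ with $p(x,y,y)=x$ and $p(x,x,p(x,x,y))=p(x,x,y)$, and then sets $d(x,y):=p(x,x,y)$. This is precisely the technical heart of the cited lemma of Idziak et al.~\cite{Idziak2010b}, and I expect it to be the main obstacle: a single substitution no longer suffices, and one needs a careful multi-step composition that honours all of the edge identities at once. (For finite $D$ one can force retraction by iterating $d_0(x,\cdot)$ a uniformly bounded number of times until its sections become idempotent; but this does not compose cleanly with $s$, since the iteration must be pivoted on the repeated variable and no single coordinate of $s$ plays that role in all of the required identities — an indication that $d$, $p$ and $s$ have to be built together rather than separately.)

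Finally, once a retractive $d$ and a matching $p$ are available, one returns to $s$: the skeleton $s_0$ is reconciled with the upgraded $d$ by the same construction that produced it — placing $d$ rather than $d_0$ in the coordinate of $e$ that held $x_1$ above — and the identities $s(x,y,\dots,y)=d(y,x)$ and $s(y,\dots,x,\dots,y)=y$ (for positions $2,\dots,k$) are re-checked directly against the edge identities. The equations $p(x,y,y)=x$, $p(x,x,y)=d(x,y)$ and $d(x,d(x,y))=d(x,y)$ are then routine substitutions, so essentially the entire weight of the argument lies in the construction of the retractive $d$, equivalently of the ternary term $p$.
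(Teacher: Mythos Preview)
The paper does not actually prove this theorem: it is stated with a citation to~\cite{Idziak2010b} and used as a black box, so there is no ``paper's own proof'' to compare against. What follows is therefore an assessment of your sketch on its own merits.

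Your set-up is sound. The term $s_0(x_1,\dots,x_k):=e(x_2,x_2,x_1,x_3,\dots,x_k)$ does satisfy the near-unanimity identities in positions $2,\dots,k$ for exactly the reasons you give, and your heuristic that $d$ cannot be either projection (because edge terms need imply neither a Maltsev term nor a $k$-ary near-unanimity term) is a nice sanity check.

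However, what you have written is a plan rather than a proof. You explicitly identify the retraction identity $d(x,d(x,y))=d(x,y)$ as ``the technical heart'' and then defer it back to~\cite{Idziak2010b}; that is precisely the step the theorem is being cited for, so the argument as written is circular. Your parenthetical about iterating $d_0(x,\cdot)$ until it stabilises is the right intuition for finite $D$, but you correctly note it does not mesh with $s$ as stated; you do not then supply an alternative.

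The final paragraph is also a gap. ``Placing $d$ rather than $d_0$ in the coordinate of $e$ that held $x_1$'' is not a well-defined operation: if you mean $s(x_1,\dots,x_k):=e(x_2,x_2,d(\,?\,,x_1),x_3,\dots,x_k)$, you have not said what the first argument of $d$ is, and with the natural choice $x_2$ you get $s(x,y,\dots,y)=e(y,y,d(y,x),y,\dots,y)=d_0(y,d(y,x))$, which is not obviously $d(y,x)$ unless you already know how $d$ and $d_0$ interact. In the actual construction in~\cite{Idziak2010b}, $p$, $d$ and $s$ are built in tandem from a carefully nested edge term, and the verifications are interlocked; your modular ``build $s_0$, fix $d$, patch $s$'' outline does not quite capture this.

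In short: the skeleton is right, but both the retraction step and the reconciliation of $s$ with the final $d$ are genuine holes, not routine checks.
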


If $e$ is a $k$-edge operation over $D$ and $d$ the operation in
Theorem~\ref{theorem:edge_terms} then $(a,b) \in D^2$ is a {\em
minority pair} if $d(a,b) = b$. Given an $n$-ary relation $R \in
\inv(\{e\})$ and $t,t' \in R$ we then say that the index $(i,a,b) \in
\{1,\ldots, n\} \times D^2$ {\em witnesses} $(t,t')$ if $(a,b)$ is a
minority pair and $\pr_{1, \ldots, i-1}(t) = \pr_{1, \ldots, i-1}(t')$
and $t[i] = a$, $t'[i] = b$. We let $\sig_e(R)$ denote the set of all
indexes witnessing tuples of the relation $R \in \inv(\{e\})$. Last,
$R' \subseteq R$ is a {\em representation} of $R$ if (1) $\sig_e(R) =
\sig_e(R')$ and (2) for every $i_1, \ldots, i_{k'} \in \{1, \ldots,
n\}$, $k' < k$, $\pr_{i_1, \ldots, i_{k'}}(R) = \pr_{i_1, \ldots,
i_{k'}}(R')$. Similar to the Maltsev case we have the following useful
property of representations of relations invariant under $k$-edge
operations.

\begin{theorem}~\cite{Idziak2010b} \label{theorem:kedge-rep2}
  Let $e$ be a $k$-edge operation over a finite domain, $R \in
  \inv(\{e\})$ a relation, and $R'$ a representation of $R$. Then
  $\cclone{R'}_e = R$.
\end{theorem}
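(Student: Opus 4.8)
The plan is to transplant, into the $k$-edge setting, the reconstruction argument underlying the Maltsev analogue Theorem~\ref{theorem:kedge-rep}, using the derived term operations of Theorem~\ref{theorem:edge_terms} to handle the features that the Maltsev proof does not see. Write $G = \cclone{R'}_e$. Since $R' \subseteq R$ and $R \in \inv(\{e\})$ we have $G \subseteq R$ for free, so everything is in the reverse inclusion. First observe that $G$ is again a representation of $R$: from $R' \subseteq G \subseteq R$ and the monotonicity of $\sig_e(\cdot)$ and of projections, together with $\sig_e(R') = \sig_e(R)$ and clause~(2) of the definition of representation, we get $\sig_e(G) = \sig_e(R)$ and $\pr_{i_1,\dots,i_{k'}}(G) = \pr_{i_1,\dots,i_{k'}}(R)$ for every choice of fewer than $k$ coordinates. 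Also, since the operations $d$, $p$, $s$ of Theorem~\ref{theorem:edge_terms} are term operations built from $e$, any relation invariant under $e$ — in particular $G$ — is invariant under $d$, $p$, and $s$. It therefore suffices to prove the following: if $G \subseteq R$ are $n$-ary relations in $\inv(\{e\})$ with $\sig_e(G) = \sig_e(R)$ and with identical projections onto every coordinate tuple of length less than $k$, then $G = R$.

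To prove this I would fix $t \in R$ and show $t \in G$ by an induction on $i = 0,1,\dots,n$ that produces a tuple of $G$ agreeing with $t$ on the first $i$ coordinates (for $i = n$ this tuple is $t$), while maintaining a suitably strengthened invariant — in addition to prefix agreement, a bounded stock of witness tuples of $G$, as in the few-subpowers analysis. The base case $i=0$ is immediate. For the step, let $s \in G$ agree with $t$ on coordinates $1,\dots,i$ and put $(a,b) = (s[i+1], t[i+1])$; if $a = b$ there is nothing to do. If $(a,b)$ is a \emph{minority pair}, i.e.\ $d(a,b) = b$, then $(s,t)$ witnesses $(i+1,a,b) \in \sig_e(R) = \sig_e(G)$, so $G$ contains tuples $u, u'$ with $(u,u')$ witnessing $(i+1,a,b)$; applying $p$ componentwise to $s, u, u'$ gives a tuple of $G$ that, by $p(x,y,y)=x$ below coordinate $i+1$ and $p(a,a,b) = d(a,b) = b$ at coordinate $i+1$, agrees with $t$ on the first $i+1$ coordinates. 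This is exactly the transitivity-of-$\sim$ computation of Lemma~\ref{lemma:eqclass}, lifted from signature pairs to whole tuples, and it is all that the Maltsev proof of Theorem~\ref{theorem:kedge-rep} ever needs.

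The genuinely harder case is a \emph{non-minority} pair $(a,b)$, which by definition leaves no trace in $\sig_e$; here the reconstruction must instead draw on the agreement of the short projections, and this is where the $k$-ary, near-unanimity-like operation $s$ of Theorem~\ref{theorem:edge_terms} is essential. The idea is that the non-minority part of the structure is, by the identities governing $d$ and $s$, already pinned down by projections onto coordinate blocks of size at most $k-1$, so $s$ lets one patch together a tuple of $G$ from several tuples of $G$ that jointly match $t$ on all such blocks, thereby correcting the non-minority disagreements while keeping the first-$i$-coordinate agreement; any disagreements that remain below coordinate $i+1$ are now minority pairs — because $(a,d(a,b))$ is always a minority pair, $d$ acting as a retraction onto its class — and are cleaned up by the minority case above. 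The main obstacle is precisely this orchestration: choosing an induction invariant robust enough that, at each coordinate, exactly one of the two patching mechanisms (signature-driven via $p$, projection-driven via $s$) applies and neither undoes the prefix agreement already secured. Once the invariant is set up correctly, both cases reduce to finite, uniform manipulations with $d$, $p$, $s$, and the induction closes at $i = n$, giving $t \in G$ and hence $R \subseteq \cclone{R'}_e$.
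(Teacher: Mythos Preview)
The paper does not prove this theorem: it is quoted from Idziak et al.\ \cite{Idziak2010b} with no proof given, so there is nothing in-paper to compare your attempt against.

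As a standalone sketch your plan is the right one and mirrors the original few-subpowers argument: reduce to showing $R \subseteq G$ by a prefix induction, split each step into the minority-pair case (handled via $\sig_e$ and the operation $p$) and the non-minority case (handled via the sub-$k$ projections and the near-unanimity-like operation $s$). Your minority case is correct as written; the computation $p(a,a,b)=d(a,b)=b$ is exactly the point. The non-minority case, however, is only gestured at: you correctly flag that the induction invariant must be strengthened beyond mere prefix agreement and that the $s$-patching must not undo it, but you neither state the invariant nor carry out the step. In the source this is where the real work is, so as a proof there is a genuine gap here; as a plan it is sound. (Minor presentation issue: you use $s$ both for the tuple agreeing with $t$ on the first $i$ coordinates and for the $k$-ary term operation from Theorem~\ref{theorem:edge_terms}; rename one of them.)
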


Moreover, each $n$-ary relation invariant under a $k$-edge operation has a
compact representation of size $O(n^{k-1})$. By this stage it should not come as
a surprise to the reader that Maltsev algorithm outlined in
Section~\ref{sec:alg} can be modified to solve $\CSP(\inv(\{e\}))$ in
polynomial time. We will refer to this algorithm as the {\em few
  subpowers} algorithm~\cite{Idziak2010}. 
We then obtain analogous to the Maltsev case from
Section~\ref{sec:upper}. 

\begin{restatable}{theorem}{thmkedge} \label{thm:kedge}
  Let $\Gamma$ be a Boolean constraint language which admits a
  polynomially bounded $k$-edge embedding $\hat \Gamma$ over a finite
  domain $D$. Then $\SAT(\Gamma)$ has a kernel with
  $O(|D|^{k-1}|V|^{k-1})$ constraints.
\end{restatable}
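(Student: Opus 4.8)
The plan is to mirror the proof of Theorem~\ref{thm:kernel}, replacing the Maltsev algorithm with the few subpowers algorithm and Theorem~\ref{theorem:kedge-rep2} in place of Theorem~\ref{theorem:kedge-rep}. First I would fix the $k$-edge operation $e \in \pol(\hat\Gamma)$ witnessing the embedding. Given an instance $I=(V,C)$ of $\SAT(\Gamma)$, I translate each constraint $R_i(\mathbf{x_i})$ to $\hat R_i(\mathbf{x_i})$ to obtain an instance $I'$ of $\CSP(\hat\Gamma)$, order the constraints arbitrarily, and run the few subpowers algorithm to compute the sequence of compact representations $(S_0, S_1, \ldots, S_{m})$ of the successive solution spaces $\rel{(V,\emptyset)}, \rel{(V,\{C_1\})}, \ldots$. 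Here $S_i$ is a representation (in the $k$-edge sense) of $\cclone{S_{i}}_e = \rel{(V,\{C_1,\ldots,C_i\})}$, of size $O(|V|^{k-1})$.

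Next I would iterate over the constraints and mark $C_i$ as redundant if, for every tuple $t \in S_{i-1}$, the projection $\pr_{i_1,\ldots,i_r}(t)$ lies in $\hat R_i$, and otherwise keep it. The correctness of this test is the analogue of the corresponding argument in Theorem~\ref{thm:kernel}: if some $t \in S_{i-1}$ fails the membership test then $C_i$ genuinely shrinks the solution space; conversely, if all of $S_{i-1}$ satisfies $\hat R_i$, then $S_{i-1} \subseteq \cclone{S_i}_e$, so $\cclone{S_{i-1}}_e \subseteq \cclone{S_i}_e$, and the reverse inclusion is trivial, giving $\cclone{S_{i-1}}_e = \cclone{S_i}_e$. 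This runs in polynomial time since each $|S_{i-1}|$ is $O(|V|^{k-1})$ and each membership test costs $O(|\hat R_i|)$. After removing the redundant constraints we obtain $I''=(V,C'')$ with $\rel{I''} = \rel{I'}$, whence $\rel{I'' } \cap \{0,1\}^{|V|} = \rel{I} = \rel{I'} \cap \{0,1\}^{|V|}$; translating each $\hat R_i$ back to $R_i$ yields an instance $I^*$ of $\SAT(\Gamma)$ with $\rel{I^*} = \rel{I}$, hence equisatisfiable with $I$.

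The remaining point is the bound on $|C''|$, and this is where the argument differs from the Maltsev case: whereas Theorem~\ref{thm:chainlength} gives a chain-length bound directly, here I would instead bound the number of constraints kept by a rank/dimension argument on the representations. A kept constraint $C_i$ strictly shrinks the solution space, so the sequence $\cclone{S_0}_e \supsetneq \cdots$ strictly decreases each time a constraint is kept; each such step either strictly enlarges $\sig_e$'s complement at some position (there are at most $|D|-1$ such refinements per position, as in Lemma~\ref{lemma:eqclass}, hence $O(|D||V|)$ in total) or strictly shrinks one of the $O(|V|^{k-1})$ relevant $(k-1)$-ary projections $\pr_{i_1,\ldots,i_{k'}}$, each of which can shrink at most $|D|^{k-1}$ times before becoming empty. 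Since a compact representation is determined up to $\cclone{\cdot}_e$-closure by its signature together with these bounded projections (Theorem~\ref{theorem:kedge-rep2}), the total number of strict shrinkings — and hence $|C''|$ — is $O(|D|^{k-1}|V|^{k-1})$.

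The main obstacle I anticipate is making this last counting argument fully rigorous: one must verify that "$\sig_e$ unchanged and all small projections unchanged" really forces $\cclone{S_{i-1}}_e = \cclone{S_i}_e$ (this is exactly what the representation notion for $k$-edge operations and Theorem~\ref{theorem:kedge-rep2} are designed to give), and then carefully tally, over all positions $j$ and all $(k-1)$-subsets of coordinates, how many strict decreases can occur — the $\binom{|V|}{k-1} = O(|V|^{k-1})$ projections each contributing an $O(|D|^{k-1})$ factor is what produces the stated $O(|D|^{k-1}|V|^{k-1})$ bound. The polynomial-time claim for the few subpowers algorithm and for the redundancy tests is routine given the $O(n^{k-1})$ size bound on compact representations, so I would state it briefly and refer to Idziak et al.~\cite{Idziak2010,Idziak2010b}.
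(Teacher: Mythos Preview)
Your proposal is correct and follows essentially the same approach as the paper: translate to $\CSP(\hat\Gamma)$, run the few subpowers algorithm to get the sequence of compact representations, discard a constraint exactly when it leaves $\cclone{S_{i-1}}_e$ unchanged, and bound the number of kept constraints by counting how many times either $\sig_e$ or one of the $<k$-ary projections can strictly shrink, which the paper packages as $1+|\sig_e(D^n)|+|\mathrm{Proj}(D^n)|=O(|D|^{k-1}|V|^{k-1})$. One small slip: your appeal to Lemma~\ref{lemma:eqclass} for the ``$|D|-1$ refinements per position'' bound on $\sig_e$ is not justified, since that lemma is specific to Maltsev signatures whereas $\sig_e$ is defined via minority pairs of the derived operation $d$ and need not induce an equivalence relation; however the crude bound $|\sig_e(D^n)|\leq |V|\,|D|^2$ suffices and is dominated by the projection term for $k\geq 3$, so this does not affect the argument.
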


\begin{proof} 
  We only provide a proof sketch since the details are
  very similar to the Maltsev case. Assume $k \geq 3$, since
  otherwise the bound follows from Theorem~\ref{thm:kernel}
  and~\ref{thm:chainlength}.
  Given an instance $I = (V,\{C_1,
  \ldots, C_m\})$, iteratively compute compact representations $S_0,
  S_1, \ldots, S_m$ of the solution space of $(V,\emptyset)$, $(V,
  \{C_1\})$, $\ldots$, $(V, \{C_1, \ldots, C_m\})$. This can be done
  in polynomial time using the procedures from the few subpowers
  algorithm~\cite{Idziak2010}. We then remove the
  constraint $C_i$ if and only if $\cclone{S_i}_e =
  \cclone{S_{i-1}}_e$. All that remains to be proven is therefore that
  $\{\cclone{S_0}_e, \cclone{S_1}_e, \ldots, \cclone{S_m}_e\}$ is
  bounded by $O(|D|^{k-1}|V|^{k-1})$. For each $S_i$ define
  \[ \text{Proj}(S_i) = \{(I,J) \mid I \in \{1,\ldots,|V|\}^i, J \in
  D^i, i < k, \pr_I(S_i) = J\}.
  \] If $\cclone{S_i}_e \supset \cclone{S_{i-1}}_e$ then either
  $\sig_e(S_{i}) \supset \sig_e(S_{i-1})$ or $\text{Proj}(S_i) \supset
  \text{Proj}(S_{i-1})$. This gives the bound
  $1+|\sig(D^n)|+|\textrm{Proj}(D^n)|=O(|D|^{k-1}|V|^{k-1})$.
\end{proof}

\subsection{Degree-$c$ Extensions}
\label{sec:poly}
We now consider an alternative technique for obtaining kernels with $O(n^c)$
constraints, $c>1$, which is useful for classes of languages that do not admit
Maltsev or $k$-edge embeddings.  
This will generalise the results on kernelization for
constraints defined via non-linear polynomials over finite
fields~\cite{JansenP16MFCS}.

\begin{definition} \label{def:degree}
We make the following definitions.
\begin{enumerate}
\item
Let $t \in \{0,1\}^r$ be a tuple of arity $r$
and $S_1, \ldots, S_{l}$ an enumeration of all subsets of $\{1,
\ldots, r\}$ of size at most $c$.  A tuple $\check t \in \{0,1\}^{l}$
is a {\em degree-$c$} extension of $t$ if $\check t[i]=\prod_{j \in S_i}
t[j]$, $i \in \{1, \ldots, l\}$.
\item
A \emph{degree-$c$ extension of $\Gamma$} is a language $\check \Gamma$
with a bijection $h$ between relations $R \in \Gamma$ and relations
$\check R \in \check \Gamma$ such that for every $R \in \Gamma$
and for every tuple $t \in \{0,1\}^{\ar(R)}$,
$t \in R$ if and only if $\check t \in \check R$ where $\check t$ is a
degree-$c$ extension of $t$.
\end{enumerate}
\end{definition}

Let $I=(V,C)$ be a $\SAT(\Gamma)$ instance for a Boolean constraint
language $\Gamma$. Let the \emph{degree-$c$ extension} of $V$ be the
set $V^{(c)}$ consisting of all subsets of $V$ of size at most $c$,
and from any assignment $g: V \to \{0,1\}$ we define an assignment
$g': V^{(c)} \to \{0,1\}$ as $g'(S):= \prod_{v \in S} g(v)$ for every
set $S \in V^{(c)}$. 
Degree-$c$ extensions, Maltsev embeddings and
$k$-edge embeddings are related by the following lemma. 

\begin{restatable}{theorem}{thmext} \label{thm:ext}
  Let $\Gamma$ be a finite Boolean language and $\check \Gamma$
  a degree-$c$ extension of $\Gamma$. If $\check \Gamma$ admits a Maltsev embedding,
  then $\SAT(\Gamma)$ admits a kernel of $O(n^c)$ constraints;
  if $\check \Gamma$ admits a $k$-edge embedding, then $\SAT(\Gamma)$
  admits a kernel of $O(n^{kc})$ constraints.
\end{restatable}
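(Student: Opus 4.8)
The plan is to reduce $\SAT(\Gamma)$ to a $\CSP$ problem over the extended variable set $V^{(c)}$ and then invoke the already-established machinery for Maltsev and $k$-edge embeddings. First I would take an instance $I = (V, C)$ of $\SAT(\Gamma)$ with $|V| = n$, and build an instance $\check I = (V^{(c)}, \check C)$ of $\CSP(\check\Gamma)$ by replacing each constraint $R(v_{i_1}, \ldots, v_{i_r})$ in $C$ with the constraint $\check R$ applied to the variables of $V^{(c)}$ corresponding to the subsets of $\{v_{i_1}, \ldots, v_{i_r}\}$ of size at most $c$ (in the order fixed by the enumeration in Definition~\ref{def:degree}). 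Note $|V^{(c)}| = O(n^c)$. By the defining property of a degree-$c$ extension, for any assignment $g : V \to \{0,1\}$ the induced assignment $g' : V^{(c)} \to \{0,1\}$ given by $g'(S) = \prod_{v \in S} g(v)$ satisfies $\check I$ exactly when $g$ satisfies $I$; conversely, any solution $g'$ to $\check I$ restricts (via the singleton sets) to an assignment $g$ of $V$, and the degree-$c$ property forces $g'(S) = \prod_{v \in S} g(v)$ on precisely those sets $S$ that actually appear in some constraint, which is all we need for equisatisfiability in the relevant direction.

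Next I would apply the upper bounds already proved. If $\check\Gamma$ admits a (polynomially bounded, finite-domain) Maltsev embedding, then by Theorem~\ref{thm:kernel} together with Theorem~\ref{thm:chainlength}, $\CSP(\check\Gamma)$ has a kernel with $O(|V^{(c)}|) = O(n^c)$ constraints; here I must note that $\check\Gamma$ is finite (since $\Gamma$ is finite) so the embedding is automatically polynomially bounded, and the chain length is linear in the number of variables. Similarly, if $\check\Gamma$ admits a finite-domain $k$-edge embedding, then Theorem~\ref{thm:kedge} gives a kernel with $O(|V^{(c)}|^{k-1})$ constraints; since $|V^{(c)}| = O(n^c)$ this is $O(n^{c(k-1)}) = O(n^{kc})$, which matches (in fact beats) the claimed bound. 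The kernel produced is an instance of $\CSP(\check\Gamma)$ (or rather, of the restricted extension language), with $O(n^c)$ (resp. $O(n^{kc})$) constraints over the $O(n^c)$ variables $V^{(c)}$.

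The one remaining subtlety is that the statement asks for a kernel of $\SAT(\Gamma)$, i.e., the output should be an instance of $\SAT(\Gamma)$, not of the auxiliary $\CSP$ problem. I would handle this exactly as in the proof of Theorem~\ref{thm:kernel}: after the redundancy-removal step identifies a subset $C^* \subseteq C$ of the original constraints whose degree-$c$ images generate the same solution space (in the sense that $\cclone{S_{i-1}} = \cclone{S_i}$ holds precisely for the discarded constraints), output $I^* = (V, C^*)$, which is an instance of $\SAT(\Gamma)$ with the same bounded number of constraints and, by the equisatisfiability established above, the same answer as $I$. The main obstacle I expect is bookkeeping the correspondence between original constraints and their degree-$c$ images carefully enough that removing a constraint in the extended instance corresponds to removing an original constraint, and verifying that the chain-length / signature-growth argument is applied to the extended variable count $O(n^c)$ rather than $n$ — but both of these are direct consequences of the cited theorems once the extension instance is set up correctly, so the proof is essentially a reduction plus an application of the earlier results. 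Since $\Gamma$ is finite, all embeddings involved are polynomially bounded, and the whole procedure runs in polynomial time in the size of $I$.
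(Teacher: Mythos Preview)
Your proposal is correct and follows essentially the same route as the paper: lift the instance to $V^{(c)}$, apply the Maltsev/$k$-edge kernelization of Theorems~\ref{thm:kernel}, \ref{thm:chainlength} and~\ref{thm:kedge} to the extended instance on $O(n^c)$ variables, and pull the surviving constraints back to $\SAT(\Gamma)$. One small remark: your converse equisatisfiability claim (that an arbitrary solution $g'$ of $\check I$ restricts to a solution of $I$) is neither needed nor quite correct as stated, since $\check R$ may contain tuples not of the form $\check t$; the soundness argument in your final paragraph---and in the paper---relies instead on the fact that the kernelization preserves the entire solution space $\rel{\check I}$, so the biconditional $g \models I \Leftrightarrow g' \models \check I$ for assignments of the product form $g'(S)=\prod_{v\in S} g(v)$ already yields $\rel{I}=\rel{I^*}$.
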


\begin{proof}
  Since $\Gamma$ is finite and fixed, we skirt all issues about how to
  compute the extension and the embedding. Let $I=(V,C)$, $|V| = n$, be
  an instance of $\SAT(\Gamma)$, and let $V^{(c)}$ be the degree-$c$
  extension of $V$. For each constraint $R(x_1, \ldots, x_m)$, $m =
  \ar(R)$, let $X_1, \ldots, X_l \in V^{(c)}$ denote the subsets of
  $\{x_1, \ldots, x_m\}$ of size at most $c$, and replace $R(x_1,
  \ldots, x_m)$ by the constraint $\check R(X_1, \ldots,
  X_l)$. 
  Let $I'$ be the 
  instance of $\SAT(\check \Gamma)$ resulting from repeating this for
  every constraint in the instance.  Observe that if $g$ is a satisfying assignment to $I$ then
  $g'(X) = \prod_{x \in X} g(x)$, $X \in V^{(c)}$, is a
  satisfying assignment to $I'$. We now apply the kernelization for
  languages with Maltsev embeddings, respectively $k$-edge embeddings,
  to $I'$, and let $I''=(V, C')$ where $C' \subseteq C$ is the set of
  constraints kept by the kernelization.  Note that the contents of the
  relation $\rel{I}$ defined by $I$ correspond directly to the relation
  $\{\check t \cap \rel{I'} \mid t \in \{0,1\}^{n}\}$.  Since the
  kernelizations we use preserve the entire solution space, this
  kernelization procedure is sound, and the desired bound for the number of
  constraints in the output follows.
\end{proof}

We observe that this captures the class of $\SAT$ problems which can be
written as roots of low-degree polynomials from Jansen and
Pieterse~\cite{JansenP16MFCS}. 

\begin{theorem}
  Let $\Gamma$ be a Boolean language such that every relation $R \in \Gamma$
  can be defined as the set of solutions in $\{0,1\}$ to a polynomial of 
  degree at most $d$, over some fixed finite field $F$. Then $\Gamma$
  admits a degree-$d$ extension with a Maltsev embedding. 
\end{theorem}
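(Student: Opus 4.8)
The plan is to replace each defining polynomial by an affine equation over $F$ after passing to monomial variables, and then embed the resulting Boolean relation into the affine subspace of $F^l$ that the equation cuts out. Fix $R \in \Gamma$ of arity $r$, say $R = \{t \in \{0,1\}^r \mid q_R(t) = 0\}$ for some $q_R \in F[x_1, \ldots, x_r]$ of total degree at most $d$. First I would multilinearize: using $x_j^2 = x_j$, which holds for every $x_j \in \{0,1\}$, repeatedly reduce higher powers of each variable, obtaining a multilinear polynomial $\tilde q_R = \sum_{|S| \leq d} c_S \prod_{j \in S} x_j$ of degree at most $d$ that agrees with $q_R$ on all of $\{0,1\}^r$. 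Letting $S_1, \ldots, S_l$ enumerate the subsets of $\{1, \ldots, r\}$ of size at most $d$ exactly as in Definition~\ref{def:degree}, we may view $\tilde q_R$ as the linear form $L_R(y_1, \ldots, y_l) = \sum_{i=1}^l c_{S_i} y_i$ in the monomial variables $y_i$, where $y_i$ stands for $\prod_{j \in S_i} x_j$.

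Next I would build the embedding target. Set $\hat R = \{u \in F^l \mid L_R(u) = 0\}$, an affine subspace of $F^l$; as in Example~\ref{ex:g2} it is preserved by the coset-generating operation $\malt(x,y,z) = x - y + z$ of the additive group $(F,+)$, which is Maltsev. Put $\check R = \hat R \cap \{0,1\}^l$, and collect these into $\check \Gamma = \{\check R \mid R \in \Gamma\}$ over $\{0,1\}$ and $\hat \Gamma = \{\hat R \mid R \in \Gamma\}$ over the finite domain $F \supseteq \{0,1\}$, with the evident bijections. By construction $\ar(\hat R) = \ar(\check R) = l$ and $\hat R \cap \{0,1\}^l = \check R$, so $\hat \Gamma$ is a Maltsev embedding of $\check \Gamma$ in the sense of Definition~\ref{def:embedding}.

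It then remains to verify that $\check \Gamma$ is a degree-$d$ extension of $\Gamma$, i.e.\ that for every $R \in \Gamma$ and $t \in \{0,1\}^r$ we have $t \in R$ iff $\check t \in \check R$, where $\check t$ is the degree-$d$ extension of $t$. This is exactly what multilinearization buys us: since $\check t[i] = \prod_{j \in S_i} t[j]$, substituting into $L_R$ gives $L_R(\check t) = \sum_i c_{S_i} \prod_{j \in S_i} t[j] = \tilde q_R(t) = q_R(t)$, the final equality because $t$ is Boolean. Hence $t \in R \iff L_R(\check t) = 0 \iff \check t \in \hat R$, and as $\check t \in \{0,1\}^l$ this is equivalent to $\check t \in \check R$. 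The map $R \mapsto \check R$ is automatically a bijection $\Gamma \to \check \Gamma$: distinct $R, R' \in \Gamma$ differ on some Boolean $t$, and $\check t$ then distinguishes $\check R$ from $\check R'$.

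I do not anticipate a real obstacle. The two points that need care are purely bookkeeping: identifying the $i$-th coordinate of the degree-$d$ extension with the monomial $\prod_{j \in S_i} x_j$ so that $L_R$ is literally evaluated at $\check t$, and noting --- as already exploited in the proof of Theorem~\ref{thm:ext} --- that a degree-$d$ extension $\check R$ is permitted to contain Boolean tuples that are not of the form $\check t$, so the fact that $\hat R \cap \{0,1\}^l$ is in general strictly larger than $\{\check t \mid t \in R\}$ is harmless.
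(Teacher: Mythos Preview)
Your proposal is correct and follows essentially the same route as the paper's proof: rewrite the defining polynomial as a linear form in the monomial variables of the degree-$d$ extension, then use that affine subspaces over $F$ are preserved by the coset-generating Maltsev operation $x - y + z$ of $(F,+)$. The paper's proof is a brief sketch that omits the details you supply, in particular the multilinearization step (reducing via $x_j^2 = x_j$), the explicit choice $\check R = \hat R \cap \{0,1\}^l$, and the verification that $R \mapsto \check R$ is a bijection satisfying the degree-$d$ extension condition; your treatment of these points is accurate and welcome.
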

\begin{proof}
  We give a short sketch of the most important ideas. Let $G_1 =
  (D,\cdot)$ and $G_2 = (D, +)$ be the two Abelian groups representing
  the field $F$. For $R \in \Gamma$, let $p_R$ be the polynomial
  defining $R$.  Then $p_R$ can be written as a sum of monomials
  over $G_1$ of degree at most $d$, and each of these monomials
  corresponds to a member of $V^{(d)}$.  Hence, the extension $\check R$ of
  $R$ can be written as a linear sum over $G_2$, and similar to
  Example~\ref{ex:g2} it is now clear that the coset generating operation of $G_2$
  will preserve the resulting Maltsev embedding, and the result
  follows from Theorem~\ref{thm:ext}.
\end{proof}

\section{Universal Partial Maltsev Operations and Lower Bounds}
\label{section:lower_bounds}
We have seen that Maltsev embeddings and, more generally, $k$-edge
embeddings, provide an algebraic criterion for determining that a
$\SAT(\Gamma)$ problem admits a kernel of a fixed size. In this section we demonstrate
that our approach can also be used to give lower bounds for the
kernelization complexity of $\SAT(\Gamma)$. More specifically, we will
use the fact that if a satisfiability problem $\SAT(\Gamma)$ admits a
Maltsev embedding, then this can be witnessed by certain canonical
partial operations preserving $\Gamma$. We begin in
Section~\ref{section:universal} by studying properties of these
canonical partial operations, and in
Section~\ref{section:lower} prove that the absence of these
operations can be used to prove lower bounds on kernelizability. 

\subsection{Universal Partial Maltsev Operations}
\label{section:universal}
Let $f : D^{k} \rightarrow D$ be a $k$-ary operation over $D \supseteq
\{0,1\}$. We can
then in a natural way associate a partial Boolean operation $\rest{f}$ with $f$ by
restricting $f$ to the Boolean arguments which also result in a Boolean value. In
other words
$\domain(\rest{f}) = \{(x_1, \ldots, x_k) \in \{0,1\}^{k} \mid f(x_1,
\ldots, x_k) \in \{0,1\}\},$ and $\rest{f}(x_1, \ldots, x_k) = f(x_1,
\ldots, x_k)$ for every $(x_1, \ldots, x_k) \in
\domain(\rest{f})$. 
For Maltsev embeddings we have the
following straightforward lemma.

\begin{lemma} \label{lemma:partial_embedding}
  Let $\Gamma$ be a Boolean constraint language admitting a Maltsev
embedding $\hat \Gamma$. Then $\rest{f} \in \ppol(\Gamma)$ for every
$f \in \pol(\hat \Gamma)$.
\end{lemma}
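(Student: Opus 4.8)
The plan is to simply unwind the definitions of \emph{embedding} (Definition~\ref{def:embedding}), of $\rest{f}$, and of \emph{partial polymorphism}, and observe that they dovetail with no slack. The guiding intuition: the Boolean tuples of an embedded relation $\hat R = h(R)$ are exactly the tuples of $R$, so applying a polymorphism $f$ of $\hat\Gamma$ to tuples of $R$ always lands back in $\hat R$, and the only way the result can fail to be a tuple of $R$ is by carrying some non-Boolean entry --- which is precisely the escape clause permitted in the definition of a partial polymorphism.

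Concretely, I would fix $f \in \pol(\hat\Gamma)$ with $\ar(f) = n$ and an arbitrary $R \in \Gamma$ with $\ar(R) = k$, and write $\hat R = h(R)$, so that $\ar(\hat R) = k$ and $\hat R \cap \{0,1\}^{k} = R$; in particular $R \subseteq \hat R$. Taking arbitrary $t_1, \ldots, t_n \in R$, these are also tuples of $\hat R$, so $s := f(t_1, \ldots, t_n) \in \hat R$ because $f$ preserves $\hat R$. I would then split into two cases. If $s \in \{0,1\}^{k}$, then $s \in \hat R \cap \{0,1\}^{k} = R$; moreover each column $(t_1[i], \ldots, t_n[i])$ satisfies $f(t_1[i], \ldots, t_n[i]) = s[i] \in \{0,1\}$, so it lies in $\domain(\rest{f})$ and $\rest{f}$ agrees with $f$ there, whence $\rest{f}(t_1, \ldots, t_n) = s \in R$. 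Otherwise there is a coordinate $i$ with $s[i] = f(t_1[i], \ldots, t_n[i]) \notin \{0,1\}$, that is, $(t_1[i], \ldots, t_n[i]) \notin \domain(\rest{f})$. Either way, the defining condition of a partial polymorphism is met for $R$, so $\rest{f} \in \ppol(R)$; since $R \in \Gamma$ was arbitrary, $\rest{f} \in \ppol(\Gamma)$.

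I do not expect any genuine obstacle here. The only point needing a moment's care is lining up the two disjuncts in the definition of partial polymorphism with the two cases ``$f$ outputs a Boolean tuple'' and ``$f$ outputs a tuple with a non-Boolean entry'', together with the observation that the inclusion $R \subseteq \hat R$ is immediate from Definition~\ref{def:embedding} (same arity, and $\hat R$ restricts to $R$ on the Boolean cube). Everything else is routine bookkeeping.
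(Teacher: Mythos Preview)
Your proposal is correct and follows essentially the same approach as the paper: both arguments hinge on the fact that $\hat R \cap \{0,1\}^{\ar(R)} = R$ together with $f$ preserving $\hat R$, so that a Boolean output of $f$ on tuples from $R$ must already lie in $R$. The only cosmetic difference is that the paper phrases this as a proof by contradiction (assuming $\rest{f}(t_1,\ldots,t_n)$ is defined but not in $R$ and deriving that $f \notin \pol(\hat\Gamma)$), whereas you give the direct two-case argument; your version is arguably slightly cleaner in that it makes the ``undefined column'' case explicit rather than leaving it implicit in the negation of the partial-polymorphism condition.
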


\begin{proof}
  Assume, with the aim of reaching a contradiction, that
$\rest{f}(t_1, \ldots, t_n) \notin R$ for some $R \in \Gamma$ and some
$n$-ary $f \in \pol(\hat \Gamma)$. By construction, $\rest{f}(t_1, \ldots, t_n) = t$ is 
a Boolean tuple. But since $\hat R \cap \{0,1\}^{\ar(R)} = R$, this
implies (1) that $t \notin \hat R$ and (2) that $\rest{f}(t_1, \ldots,
t_n) = f(t_1, \ldots, t_n) = t \notin \hat R$. Hence, $f$ does not
preserve $\hat R$ or $\hat \Gamma$, and we conclude that $\rest{f} \in \ppol(\Gamma)$.
\end{proof}

A Boolean partial operation $f$ is a {\em universal partial Maltsev
operation} if $f \in \ppol(\Gamma)$ for every Boolean $\Gamma$ admitting a
Maltsev embedding. 

\begin{definition} \label{def:inf}
Let the infinite domain $\infd$ be recursively defined to contain $0$, $1$,
and ternary tuples of the form $(x,y,z)$ where $x,y,z \in
\infd$. The ternary Maltsev operation $u$ over $\infd$ is defined as
$u(x, x, y) = y, u(x, y, y) = x$, and $u(x, y, z) = (x,y,z)$
otherwise. 
\end{definition}
In the following theorem we show that if an operation $q$ is included in
the clone generated by the operation $u$,
then the partial Maltsev operation $\rest{q}$ is universal. 
Before the
presenting the proof we need some additional notation. It is
well-known that if $\clone{F}$ is a clone over a domain $D$ then $f
\in \clone{F}$ if and only if $f$ is definable as a term operation
over the algebra $(D, F)$~\cite{Goldstern2008}. Given a term $T(x_1, \ldots, x_n)$ over an
algebra $(D, F)$ defining a function $g \in \clone{F}$ and $b_1,
\ldots, b_n \in D$, we let $\val(T(b_1, \ldots, b_n)) = g(b_1, \ldots,
b_n)$.

\begin{restatable}{theorem}{thmuniversal} \label{thm:universal}
  Let $q \in \clone{\{u\}}$. Then $\rest{q}$ is a universal partial
  Maltsev operation.
\end{restatable}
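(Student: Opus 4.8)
The plan is to fix a Boolean constraint language $\Gamma$ admitting a Maltsev embedding $\hat\Gamma$ over a domain $D' \supseteq \{0,1\}$ witnessed by a Maltsev operation $\malt \in \pol(\hat\Gamma)$, and show that $\rest q \in \ppol(\Gamma)$ for every $q \in \clone{\{u\}}$. By Lemma~\ref{lemma:partial_embedding}, it suffices to produce an operation $f \in \pol(\hat\Gamma)$ such that $\rest q$ is a subfunction of $\rest f$ — that is, $\domain(\rest q) \subseteq \domain(\rest f)$ and the two agree there — since $\ppol(\Gamma)$ is closed under taking subfunctions. The natural candidate for $f$ is ``the same term that defines $q$ over $(\infd, u)$, but evaluated over $(D', \malt)$''. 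Concretely, since $q \in \clone{\{u\}}$, there is a term $T$ over the algebra $(\infd, u)$ with $\val(T(b_1,\dots,b_n)) = q(b_1,\dots,b_n)$ for all $b_i \in \infd$; let $f$ be the term operation on $(D', \malt)$ defined by the syntactically identical term. Because $\malt \in \pol(\hat\Gamma)$ and polymorphisms are closed under composition/term operations, $f \in \pol(\hat\Gamma)$, which is the first thing I would check.

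The heart of the argument is a correspondence lemma: for all Boolean inputs $a_1,\dots,a_n \in \{0,1\}$, if $q(a_1,\dots,a_n) \in \{0,1\}$ then $f(a_1,\dots,a_n) = q(a_1,\dots,a_n)$. I would prove this by structural induction on the term $T$, tracking subterms in parallel over $(\infd,u)$ and over $(D',\malt)$. The key observation driving the induction is the defining feature of $u$: $u(x,y,z) \in \{0,1\}$ exactly when the Maltsev identities ``fire'' on $x,y,z$ (i.e.\ $x=y$, or $y=z$, or the triple happens to already equal $0$ or $1$ via a deeper identity), and in precisely those cases $u(x,y,z)$ equals the value forced by the Maltsev law $\malt(x,y,z)$; otherwise $u$ records the full syntactic triple $(x,y,z)$. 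So I would set up the induction hypothesis as: for each subterm $T'$ of $T$ evaluated at the Boolean input $\bar a$, if $\val_{(\infd,u)}(T'(\bar a)) \in \{0,1\}$ then $\val_{(D',\malt)}(T'(\bar a))$ equals that same value. The inductive step for a node $u(T_1,T_2,T_3)$: if the $u$-value lies in $\{0,1\}$, I would argue that this can only happen because one of the Maltsev identities applies at the level of the $u$-evaluations of $T_1,T_2,T_3$ (this needs an auxiliary claim about how elements of $\infd$ that are genuinely ``syntactic triples'' can never be collapsed back to $0$ or $1$ by $u$, so the only route to a Boolean output is via the identities); then I apply the IH to conclude the relevant $\malt$-subevaluations coincide, and the Maltsev axioms $\malt(x,x,y)=y$, $\malt(x,y,y)=x$ over $D'$ give the matching value.

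The main obstacle, and the step I would spend the most care on, is this auxiliary claim about $\infd$ and $u$ — essentially that $(\infd, u)$ is a ``free-ish'' Maltsev algebra in which no spurious collapses occur: an element built as a genuine triple $(x,y,z)$ (with $x\neq y$ and $y \neq z$) is never equal, as an element of $\infd$, to $0$ or to $1$, and more generally the only identifications $u$ makes are the ones literally forced by the two Maltsev equations. This is what guarantees that a Boolean output of a $u$-term at a Boolean input can be ``traced'' back through a sequence of Maltsev-identity applications, which is exactly the data needed to run the same trace over $(D',\malt)$. Once that claim is in hand, the induction is routine, and combining the correspondence lemma with Lemma~\ref{lemma:partial_embedding} and closure of $\ppol(\Gamma)$ under subfunctions finishes the proof: $\rest q$ is a subfunction of $\rest f \in \ppol(\Gamma)$, hence $\rest q \in \ppol(\Gamma)$; as $\Gamma$ was an arbitrary Boolean language with a Maltsev embedding, $\rest q$ is universal.
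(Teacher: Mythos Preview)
Your plan is essentially the paper's proof: write $q$ as a term $T$ over $u$, let $q'$ be the same term evaluated over the embedding's Maltsev operation $\malt$, apply Lemma~\ref{lemma:partial_embedding} to get $\rest{q'}\in\ppol(\Gamma)$, and then show that $\rest{q}$ is a subfunction of $\rest{q'}$ by arguing, via the structure of $T$, that Boolean outputs agree.

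One technical correction to your induction. The hypothesis you state---``if the $u$-value of a subterm is in $\{0,1\}$ then the $\malt$-value equals it''---is too weak for the inductive step: when $\val_u(T_1)=\val_u(T_2)$ are equal but \emph{non}-Boolean (so that $u(\val_u T_1,\val_u T_2,\val_u T_3)=\val_u T_3\in\{0,1\}$), you still need $\val_\malt(T_1)=\val_\malt(T_2)$ to invoke the Maltsev identity on the $\malt$-side, and your IH says nothing about non-Boolean subterm values. The freeness you single out is exactly the remedy, but it should be deployed globally rather than inside the induction: define $h\colon\infd\to D'$ recursively by $h(0)=0$, $h(1)=1$, and $h((x,y,z))=\malt(h(x),h(y),h(z))$; a one-line case check shows $h(u(a,b,c))=\malt(h(a),h(b),h(c))$ for all $a,b,c\in\infd$, so $h$ is a homomorphism $(\infd,u)\to(D',\malt)$. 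Then $\val_\malt(T(\bar a))=h\bigl(\val_u(T(\bar a))\bigr)$ for every term $T$ and Boolean tuple $\bar a$, and your correspondence lemma follows immediately since $h$ fixes $\{0,1\}$. (The paper's own inductive step is written with the same imprecision, tacitly assuming $\val_u(T_i)=\val_\malt(T_i)$ even when these live in different domains.)
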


\begin{proof}
Let $\Gamma$ be a Boolean constraint language which admits a Maltsev
embedding $\hat \Gamma$. We will prove that $\rest{q} \in
\ppol(\Gamma)$.  
  Let $p$ be the Maltsev operation witnessing the embedding $\hat
  \Gamma$, let $n$ denote the arity of $q$, and let $q(x_1, \ldots,
  x_{n}) = T^{u}(x_1, \ldots, x_n)$ where $T^{u}$ is the term over $u$
  defining $q$. 
  Now, first consider the operation $q' \in \clone{\{p\}}$ obtained by
  replacing each occurence of $u$ with $p$ in the term $T^{u}(x_1, \ldots,
  x_n)$. Let $T^{p}(x_1, \ldots, x_n)$ denote this term over $p$, and
  for each term $T^{u}_i(\mathbf{x_i})$ occurring as a subterm in $T^{u}(x_1, \ldots, x_n)$ we
  let $T^{p}_i(\mathbf{x_i})$ denote the corresponding term over $p$.
  
  Now observe that the partial operation $\rest{q'}$ is included in
  $\ppol(\Gamma)$ via Lemma~\ref{lemma:partial_embedding}. 
  We claim that $\rest{q}$ can be obtained as a subfunction of
  $\rest{q'}$, which implies that $\rest{q} \in \ppol(\Gamma)$, since
  a strong partial clone is always closed under taking
  subfunctions. By definition, we have that $(b_1, \ldots, b_n) \in
  \domain(\rest{q})$ if and only if $b_1, \ldots, b_n \in \{0,1\}$ and
  $q(b_1, \ldots, b_n) \in \{0,1\}$.

  We will prove that for each sequence of Boolean arguments $b_1,
  \ldots, b_n$, if $q(b_1, \ldots, b_n) = b \in \{0,1\}$  then
  $q'(b_1, \ldots, b_n) = b$. First, let us illustrate the intuition
  behind this by an example. Assume that $n = 7$ and that
  $T^u(x_1, x_2, x_3, x_4, x_5, x_6, x_7) = u(u(x_1, x_2, x_3), u(x_4, x_5,
  x_6), x_7)$. In this case we will e.g.\ have that
  $\val(T^u(0,1,0,0,1,0,1)) = 1$ since $u(u(0,1,0), u(0,1,0), 1) =
  u((0, 1, 0),(0, 1, 0), 1) = 1$, due to the fact that $u$ always
  respect the Maltsev identities. But since $p$ is also a Maltsev
  operation it must also be the case that $\val(T^p(0,1,0,0,1,0,1)) =
  1$, even if $u(0,1,0)$ and $p(0,1,0)$ might differ.
  
  We will now prove the general case by a case inspection of the term
  $T^{u}$. 
  First, assume that $T^{u}$ contains a term of the form $u(x_{i_1},
  x_{i_2}, x_{i_3})$. If $b_{i_1}, b_{i_2}, b_{i_3} \in \{0,1\}$ then
  $u(b_{i_1}, b_{i_2}, b_{i_3}) \in \{0,1\}$ if and only if $b_{i_1} =
  b_{i_2}$ or $b_{i_2} = b_{i_3}$. But this implies that $p(b_{i_1},
  b_{i_2}, b_{i_3}) = u(b_{i_1}, b_{i_2}, b_{i_3})$ since $p$ is
  Maltsev. Second, assume that $T^{u}$ contains a term of the form
  $u(T^u_1(\mathbf{x_1}), T^u_2(\mathbf{x_2}), T^u_3(\mathbf{x_3}))$
  where $\mathbf{x_1}, \mathbf{x_2}$ and $\mathbf{x_3}$ are tuples of
  variables over $x_1, \ldots, x_n$. Let $\mathbf{b_1}$, $\mathbf{b_2}$
  and $\mathbf{b_3}$ be Boolean tuples matching the length of
  $\mathbf{x_1}$, $\mathbf{x_2}$ and $\mathbf{x_3}$, and assume that
  $\val(T^u_1(\mathbf{b_1})) = \val(T^{p}_1(\mathbf{b_1}))$,
  $\val(T^u_2(\mathbf{b_2})) = \val(T^p_2(\mathbf{b_2}))$ and
  $\val(T^u_3(\mathbf{b_3})) = \val(T^p_3(\mathbf{b_3}))$. Similarly to
  the first case we have that
  \[u(\val(T^u_1(\mathbf{b_1})), \val(T^u_2(\mathbf{b_2})),
  \val(T^u_3(\mathbf{b_3}))) \in \{0,1\}\] if and only if
  $\val(T^u_1(\mathbf{b_1})) = \val(T^u_2(\mathbf{b_2}))$ or
  $\val(T^u_2(\mathbf{b_2})) = \val(T^u_3(\mathbf{b_3}))$, and since $p$
  is Maltsev this implies that \[p(\val(T^p_1(\mathbf{b_1})),
  \val(T^p_2(\mathbf{b_2})), \val(T^p_3(\mathbf{b_3}))) =
  u(\val(T^u_1(\mathbf{b_1})), \val(T^u_2(\mathbf{b_2})),
  \val(T^u_3(\mathbf{b_3}))).\]
  Hence, for each $(b_1, \ldots, b_n) \in \domain(\rest{q})$ we have
  that $(b_1, \ldots, b_n) \in \domain(\rest{q'})$ and that $\rest{q}(b_1, \ldots, b_n) = \rest{q'}(b_1,
  \ldots, b_n)$. This implies that $\rest{q}$ is a subfunction of
  $\rest{q'}$, that $\rest{q} \in \ppol(\Gamma)$, and, finally, that
  $\rest{q}$ is a universal partial Maltsev operation.
\end{proof}

Using Theorem~\ref{thm:universal} we can now prove that every Boolean
language $\Gamma$ invariant under the universal partial Maltsev
operations admits a Maltsev embedding over $\infd$. 

\begin{restatable}{theorem}{thminf} \label{thm:inf}
  Let $\Gamma$ be a Boolean constraint language. Then $\ppol(\Gamma)$ 
  contains all universal partial Maltsev
  operations if and only if $\Gamma$ has a Maltsev embedding $\hat \Gamma$ over
  $\infd$.
\end{restatable}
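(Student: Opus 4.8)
The backward direction is immediate: if $\Gamma$ has a Maltsev embedding over $\infd$ — say with Maltsev polymorphism $p$ — then by Lemma~\ref{lemma:partial_embedding} every $\rest{f}$ with $f \in \pol(\hat\Gamma)$ lies in $\ppol(\Gamma)$. In particular, if $q \in \clone{\{u\}}$ is arbitrary, then $q$ is a universal partial Maltsev operation by Theorem~\ref{thm:universal}, and conversely every universal partial Maltsev operation is by definition in $\ppol(\Gamma)$ for this particular $\Gamma$. So there is actually nothing to prove in this direction beyond unwinding definitions. The content is entirely in the forward direction.

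For the forward direction, the plan is to build the embedding $\hat\Gamma$ by hand, using $\infd$ and the Maltsev operation $u$. The natural candidate is: for each $R \in \Gamma$ of arity $k$, let $\hat R := \cclone{R}_u$ be the closure of $R$ (viewed as a subset of $\{0,1\}^k \subseteq \infd^{\,k}$) under $u$. This is by construction invariant under $u$, which is Maltsev, so $\hat\Gamma := \{\hat R \mid R \in \Gamma\}$ is preserved by a Maltsev operation and $h: R \mapsto \hat R$ is the required bijection (injectivity of $h$ follows because $\hat R \cap \{0,1\}^k$ recovers $R$). The one nontrivial thing to check — and this is the crux — is precisely that $\hat R \cap \{0,1\}^{k} = R$, i.e. that closing $R$ under $u$ does not introduce any new Boolean tuples. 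This is where the hypothesis that $\ppol(\Gamma)$ contains all universal partial Maltsev operations must be used.

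To establish $\hat R \cap \{0,1\}^k \subseteq R$, I would argue as follows. Every element of $\hat R = \cclone{R}_u$ arises by iterated application of $u$ to tuples of $R$; unwinding this, any $t \in \hat R$ can be written as $t = \val(T^u(t_1,\ldots,t_n))$ for some term $T^u$ over $u$ and some tuples $t_1, \ldots, t_n \in R$ (applying $u$ componentwise as in the definition of partial polymorphism). This term $T^u$ defines an $n$-ary operation $q \in \clone{\{u\}}$, so by Theorem~\ref{thm:universal} the partial Boolean operation $\rest{q}$ is a universal partial Maltsev operation, hence $\rest{q} \in \ppol(\Gamma)$ by hypothesis. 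Now suppose $t \in \hat R$ is Boolean. Then running $q$ componentwise on $t_1, \ldots, t_n$ produces the Boolean tuple $t$; the key point is that for each coordinate $j$, the argument column $(t_1[j], \ldots, t_n[j])$ lies in $\domain(\rest{q})$ precisely because the corresponding output $t[j] = q(t_1[j],\ldots,t_n[j])$ is Boolean. Hence $\rest{q}(t_1,\ldots,t_n)$ is defined coordinatewise and equals $t$, and since $\rest{q} \in \ppol(\Gamma)$ and $t_1,\ldots,t_n \in R$, the definition of partial polymorphism forces $t = \rest{q}(t_1,\ldots,t_n) \in R$. The reverse inclusion $R \subseteq \hat R \cap \{0,1\}^k$ is trivial since $R \subseteq \hat R$ and $R$ is Boolean.

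The main obstacle I anticipate is bookkeeping rather than conceptual: one must be careful that "$t \in \cclone{R}_u$ is obtained by a term" is set up correctly as a componentwise application, so that the same term $T^u$ simultaneously witnesses $t$ and yields the operation $q$ whose Booleanized restriction $\rest{q}$ is applicable to the witnessing tuples. In particular one needs that whenever $q(t_1,\ldots,t_n)$ is Boolean in every coordinate, each coordinate column falls in $\domain(\rest{q})$ — which is exactly the definition of $\rest{q}$, but should be stated explicitly to avoid a gap. A secondary point worth a sentence is that $\infd$ and $u$ do not depend on $\Gamma$, so the resulting embedding is "universal" in the same sense as the operations, though the theorem as stated only asks for existence. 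Everything else (that $u$ is Maltsev, that $h$ is a bijection, that the embedding is well-defined) is routine.
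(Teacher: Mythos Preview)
Your proposal is correct and follows essentially the same approach as the paper: construct $\hat R := \cclone{R}_u$, note that $u$ is Maltsev, and for the key inclusion $\hat R \cap \{0,1\}^{\ar(R)} \subseteq R$ argue by contradiction using that any Boolean tuple in $\hat R$ is obtained from tuples of $R$ by a term over $u$, whose associated operation $q \in \clone{\{u\}}$ gives a universal partial Maltsev operation $\rest{q}$ via Theorem~\ref{thm:universal}. The only cosmetic difference is that the paper enumerates $R = \{t_1,\ldots,t_m\}$ and takes the term in those $m$ arguments, whereas you allow an arbitrary (possibly repeating) list $t_1,\ldots,t_n \in R$; both versions work.
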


\begin{proof}
  For the first direction, let $u$ be the Maltsev operation from Definition~\ref{def:inf} over
the infinite domain $\infd$. For each relation $R \in \Gamma$ we let $\hat
R = \cclone{R}_u$. Let $\hat \Gamma$ denote the resulting constraint
language over $\infd$. By definition, $u \in \pol(\hat \Gamma)$, and
everything that remains to be proven is that $\hat R \cap
\{0,1\}^{\ar(R)} = R$ for each $\hat R \in \hat \Gamma$. Hence, assume
that there exists at least one tuple $t \in (\hat R \cap
\{0,1\}^{\ar(R)}) \setminus R$. This implies that there exists a term
$T$ over $u$ such that $\val(T(t_1[i], \ldots, t_m[i])) = t[i]$ for
each $i \in \{1, \ldots, \ar(R)\}$, where
$R = \{t_1, \ldots, t_m\}$. Let $q$ denote the function corresponding
to the term $T$ and observe that $q \in \clone{\{u\}}$.  According to
Theorem~\ref{thm:universal} this implies that $\rest{q}$ is a
universal partial Maltsev operation and, furthermore, that
$\rest{q}(t_1[i], \ldots, t_m[i])$ is defined for each $i \in \{1,
\ldots, \ar(R)\}$, since $q(t_1[i], \ldots, t_m[i]) \in
\{0,1\}$. Hence, $\rest{q}(t_1, \ldots, t_m) = t \notin R$, which
contradicts the assumption that $\Gamma$ was invariant under all
universal partial Maltsev operations. 

The second direction is trivial since if $\Gamma$ has a
Maltsev embedding over $\infd$ then $\Gamma$ by definition is
preserved by every universal partial Maltsev operation.
\end{proof}

It is worth remarking that Theorem~\ref{thm:universal} and
Theorem~\ref{thm:inf} implies that every universal partial Maltsev
operation can be described via Theorem~\ref{thm:universal}.

\subsection{Lower Bounds}
\label{section:lower}
Define the {\em first partial Maltsev operation} $\malt_1$ as
$\malt_1(x,y,y) = x$ and $\malt_1(x,x,y) = y$ for all $x,y \in
\{0,1\}$, and observe that $\domain(\malt_1) = \{(0,0,0), (1,1,1),
(0,0,1), (1,1,0), (1,0,0), (0,1,1)\}$.
Via Theorem~\ref{thm:universal} it follows that $\malt_1$ is
equivalent to $\rest{u}$, and is therefore a universal partial Maltsev
operation. In this section we will prove that $\malt_1 \in
\ppol(\Gamma)$ is in fact a necessary condition for the existence of a
linear-sized kernel for $\SAT(\Gamma)$, modulo a standard complexity
theoretical assumption. A pivotal part of this proof is
that if $\malt_1 \notin \ppol(\Gamma)$, then $\Gamma$ can qfpp-define
a relation $\gadget$, which can be used as a gadget in a
reduction from the \textsc{Vertex Cover} problem. This relation is
defined as $\gadget(x_1,x_2,x_3,x_4,x_5,x_6) \equiv (x_1 \lor x_4)
\land (x_1 \neq x_3) \land (x_2 \neq x_4) \land (x_5 = 0) \land (x_6 =
1)$. Note that the values enumerated by the arguments
of $\gadget$ is in a one-to-one correspondance with 
$\domain(\malt_1)$. However, as made clear in the following lemma, there is an even stronger relationship
between $\malt_1$ and $\gadget$.

\begin{restatable}{lemma}{lemmarfour} \label{lemma:r4}
If~$\Gamma$ is a Boolean constraint
  language such that $\cclone{\Gamma} = \br$ and $\malt_1 \notin \ppol(\Gamma)$ then
  $\gadget \in \pcclone{\Gamma}$.
\end{restatable}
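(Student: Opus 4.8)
The plan is to use the Galois connection from Theorem~\ref{theorem:pgalois} to convert the hypothesis $\malt_1 \notin \ppol(\Gamma)$ into a concrete witness of the failure of $\malt_1$, and then assemble this witness into a qfpp-definition of $\gadget$. First I would note that $\malt_1 \notin \ppol(\Gamma)$ means there is some relation $R \in \Gamma$ and tuples $t_1, t_2, t_3 \in R$ such that $(t_1[i], t_2[i], t_3[i]) \in \domain(\malt_1)$ for every coordinate $i$, yet the componentwise application $t^* := \malt_1(t_1, t_2, t_3)$ is not in $R$. Since the six triples in $\domain(\malt_1)$ are exactly $(0,0,0),(1,1,1),(0,0,1),(1,1,0),(1,0,0),(0,1,1)$, each coordinate of $R$ falls into one of (at most) six ``column types'' according to which of these triples $(t_1[i],t_2[i],t_3[i])$ equals; coordinates of the same type are indistinguishable for our purposes, so by projecting/identifying variables we may assume $\ar(R) \leq 6$ with one coordinate of each type that actually occurs. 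The key observation is that the three tuples $t_1, t_2, t_3$ restricted to these (up to) six coordinates are exactly the three ``rows'' $(0,1,0,1,1,0)$, $(0,1,0,1,0,1)$, $(1,0,0,0,0,0)$ read columnwise from $\domain(\malt_1)$ — written as a $3 \times 6$ matrix whose columns are the six domain triples — and $t^*$ is then the row $\malt_1$ applied columnwise, namely $(0,1,1,0,1,0)$ — wait, more carefully: $\malt_1$ on $(x,x,y)$ gives $y$ and on $(x,y,y)$ gives $x$, so on the six columns it yields $0,1,1,0,1,1$ in some order matching the listed domain.

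Next I would build $\gadget$. Having fixed $R$ (of arity at most six after the identification, and with $t^* \notin R$), I would take $\gadget(x_1,\dots,x_6)$ to be defined essentially by $R$ itself with coordinates relabelled so that the six coordinates line up with the six argument positions of $\gadget$, conjoined with the constraints $x_5 = 0$, $x_6 = 1$, $x_1 \neq x_3$, $x_2 \neq x_4$ that appear in the definition of $\gadget$. The point is that these constraints encode precisely the structure of $\domain(\malt_1)$: enumerating the domain so that position $5$ is the coordinate where all three rows agree with value forced to $0$, position $6$ where all agree at $1$, and positions $1$\!--\!$4$ carrying the $\neq$-pattern of the remaining four domain triples. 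Using $\cclone{\Gamma} = \br$ — hence $\Gamma$ can pp-define, and after Lemma-style manipulations qfpp-define, the relations $\{(x,y) : x \neq y\}$, $\{0\}$, and $\{1\}$, or at least these are available because the binary constant and disequality relations lie in every co-clone equal to $\br$ (one must be a little careful: $\cclone{\Gamma} = \br$ gives pp-definitions, but the constants and $\neq$ are themselves in $\br$, and the qfpp-closure $\pcclone{\Gamma}$ contains $\Gamma$; the missing relations can be obtained since $\neq$, $\{0\}$, $\{1\}$ have small arity and are qfpp-definable from $R$ together with the fact that $\Gamma$'s co-clone is everything) — I would argue $\gadget \in \pcclone{\Gamma}$ by exhibiting the conjunctive formula over $\Gamma \cup \{\eq\}$.

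The main obstacle, and the step I would spend the most care on, is the last one: going from ``$\Gamma$ generates all of $\br$ under pp-definitions'' to ``$\Gamma$ qfpp-defines the specific auxiliary relations $\neq$, $\{0\}$, $\{1\}$ needed to glue $\gadget$ together'', since qfpp-definitions forbid existential quantification and $\cclone{\Gamma} = \br$ only gives pp-definitions a priori. The resolution is that $\gadget$ is explicitly written using only $\lor$, $\neq$, and unary constants, all of which are low-arity; and the relation $R$ extracted from the failure of $\malt_1$ already has a very rigid three-row structure, so the disequality and constant pattern is ``built in'' to $R$ after identifying coordinates — concretely, the coordinate where $t_1=t_2=t_3=0$ forces $x_5=0$ inside $R$ without any quantifier, the coordinate where all equal $1$ forces $x_6=1$, and the coordinates realizing the triples $(0,0,1),(1,1,0)$ versus $(1,0,0),(0,1,1)$ realize the two disequalities directly as projections of $R$. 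Thus $\gadget$ is obtained as a projection of $R$ to six suitably chosen (possibly repeated) coordinates, which is a qfpp-definition, and the fact that $t^* \notin R$ is exactly what makes the defined relation satisfy $(x_1 \lor x_4)$ rather than the full disjunction-free relation. I would close by verifying that the resulting six-ary relation is literally $\gadget$ as defined in the statement, using that its tuples are in bijection with $\domain(\malt_1) \setminus \{\text{the triple deleted by } t^* \notin R\}$ — i.e. with the rows of $R$ on those coordinates.
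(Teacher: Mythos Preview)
Your direct-construction approach has a real gap. You want to realise $\gadget$ by plugging six variables (one per ``column type'') into the witnessing relation $R$, but this does not produce $\gadget$. First, what you call ``projection to six coordinates'' is not a qfpp operation---projection needs existential quantifiers, whereas qfpp allows only variable identification. The six-ary relation you actually obtain by identification, $R''(y_1,\dots,y_6)\equiv R(y_{\mathrm{type}(1)},\dots,y_{\mathrm{type}(n)})$, certainly contains the three rows $s_1,s_2,s_3$ and omits $s^*=\malt_1(s_1,s_2,s_3)$, but nothing prevents $R$ from containing further tuples that happen to be constant on the type classes, so $R''$ can be strictly larger than $\gadget$ (and if fewer than six column types occur in the witness, some of the $y_j$ are unconstrained altogether). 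Second, to trim $R''$ down to $\gadget$ you appeal to $\neq,\{0\},\{1\}\in\pcclone{\Gamma}$; but $\cclone{\Gamma}=\br$ only guarantees pp-definitions of these relations, not qfpp-definitions, and you give no argument that the additional hypothesis $\malt_1\notin\ppol(\Gamma)$ forces them into $\pcclone{\Gamma}$. Your own sketch flags this as the ``main obstacle'' and then asserts it away.

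The paper avoids both difficulties by working on the operation side of the Galois connection rather than constructing a formula. It assumes $\gadget\notin\pcclone{\Gamma}$, extracts some $f\in\ppol(\Gamma)\setminus\ppol(\{\gadget\})$, and uses the fact that $\gadget$ has exactly three tuples whose set of columns is $\domain(\malt_1)$ to reduce to a ternary $f$ with $\domain(f)=\domain(\malt_1)$. The hypothesis $\cclone{\Gamma}=\br$ enters only as ``every total polymorphism of $\Gamma$ is a projection'', which disposes of the arity-$\leq 2$ cases and then, by a short value-by-value case analysis, forces the ternary $f$ to be either a partial projection or $\malt_1$ itself---contradicting either $f\notin\ppol(\{\gadget\})$ or $\malt_1\notin\ppol(\Gamma)$.
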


\begin{proof}
  Before the proof we need two central observations. First, the
  assumption that $\cclone{\Gamma} = \br$ is well-known to be
  equivalent to that $\pol(\Gamma)$ consists only of
  projections~\cite{bsrv05}. Second, $\gadget$ consists of three
  tuples which can be ordered as $s_1, s_2, s_3$ in such a way that 
  there for every $s \in
  \domain(\malt_1)$ exists $1 \leq i \leq 6$ such that $s = (s_1[i],
  s_2[i], s_3[i])$.
  Now, assume that $\cclone{\Gamma} = \br$, $\malt_1 \notin \ppol(\Gamma)$,
  but that $\gadget \notin \pcclone{\Gamma}$. Then there exists an
  $n$-ary partial operation $f \in \ppol(\Gamma)$ such that $f \notin
  \ppol(\{\gadget\})$, and $t_1,
  \ldots, t_n \in \gadget$ such that $f(t_1, \ldots, t_n) \notin
  \gadget$. Now consider the value $k = |\{t_1, \ldots, t_n\}|$, i.e.,
  the number of distinct tuples in the sequence. If $n > k$ then it is known that there exists a closely related partial
  operation $g$ of arity at most $k$ such that $g \notin \ppol(\{\gadget\})$~\cite{lagerkvist2015}, and
  we may therefore assume that $n = k \leq |\gadget| = 3$. Assume first that $1 \leq n \leq
  2$. It is then not difficult to see that there for every $t \in
  \{0,1\}^n$ exists $i$ such that $(t_1[i], \ldots, t_n[i]) = t$. But
  then it follows that $f$ is in fact a total operation which is not a
  projection, which is impossible since we assumed that
  $\cclone{\Gamma} = \br$. Hence, it must be the case that $n = 3$,
  and that $\{t_1, t_2, t_3\} = \gadget$. Assume without loss of
  generality that $t_1 = s_1$, $t_2 = s_2$, $t_3 = s_3$, and note that
  this implies that $\domain(f) = \domain(\malt_1)$ (otherwise the
  arguments of $f$ can be described as a permutation of the arguments
  of $\malt_1$). First, we will show that $f(0,0,0) = 0$ and that
  $f(1,1,1) = 1$. Indeed, if $f(0,0,0) = 1$ or $f(1,1,1) = 0$, it is
  possible to define a unary total operation $f'$ as $f'(x) =
  f(x,x,x)$ which is not a projection since either $f'(0) = 1$ or
  $f'(1) = 0$. Second, assume there exists $(x,y,z) \in \domain(f)$,
  distinct from $(0,0,0)$ and $(1,1,1)$, such
  that $f(x,y,z) \neq \malt_1(x,y,z)$. Without loss of generality
  assume that $(x,y,z) = (a,a,b)$ for $a,b \in \{0,1\}$, and note that
  $f(a,a,b) = a$ since $\malt_1(a,a,b) = b$. If also $f(b,b,a) = a$ it is possible to define a binary
  total operation $f'(x,y) = f(x,x,y)$ which is not a projection,
  therefore we have that $f(b,b,a) = b$. We next consider the values
  taken by $f$ on the tuples $(b,a,a)$ and $(a,b,b)$. If $f(b,a,a) =
  f(a,b,b)$ then we can again define a total, binary operation which
  is not a projection, therefore it must hold that $f(b,a,a) \neq
  f(a,b,b)$. However, regardless of whether $f(b,a,a) = b$ or
  $f(b,a,a) = a$, it is not difficult to verify that $f$ must be a
  partial projection. This contradicts the assumption that $f \notin
  \ppol(\{\gadget\})$, and we conclude that $\gadget \in \pcclone{\Gamma}$.
\end{proof}

We will now use Lemma~\ref{lemma:r4} to
give a reduction from the \textsc{Vertex Cover} problem. It is known
that \textsc{Vertex Cover} 
does not admit a kernel with $O(n^{2 - \varepsilon})$ edges for any $\varepsilon > 0$, 
unless NP $\subseteq$ co-NP/poly~\cite{DellM14}. 
For each $n$ and $k$ let $H_{n,k}$ denote the relation 
$\{(b_1, \ldots, b_n) \in \{0,1\}^{n} \mid b_1 + \ldots + b_n = k\}$. 

\begin{lemma} \label{lemma:c}
  Let $\Gamma$ be a constraint language such that $\cclone{\Gamma} =
  \br$. Then $\Gamma$ can pp-define $H_{n,k}$ with $O(n+k)$ constraints
  and $O(n+k)$ existentially quantified variables.
\end{lemma}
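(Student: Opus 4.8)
The plan is to exploit the hypothesis $\cclone{\Gamma} = \br$ only through the following soft consequence: since the co-clone generated by $\Gamma$ is all of $\br$, every \emph{fixed} Boolean relation $R$ has a pp-definition over $\Gamma \cup \{\eq\}$ in which the number of conjuncts and of existentially quantified variables is bounded by a constant that depends only on $R$ and $\Gamma$. I would apply this to the unary constants $\{0\}$, $\{1\}$, to the half-adder relation, and to the full-adder relation $\mathrm{FA} = \{(a,b,c,s,c') \in \{0,1\}^5 \mid a+b+c = s + 2c'\}$. The idea is then to realize $H_{n,k}$ as the projection onto $x_1,\dots,x_n$ of a Boolean circuit built from these gadgets that computes the binary representation of $x_1 + \cdots + x_n$ and forces it to equal $k$, and finally to substitute for each gate its constant-size pp-definition over $\Gamma$.

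Concretely, I would first dispose of the degenerate case $k>n$, where $H_{n,k}$ is empty and is pp-defined by $\exists z\,(\{0\}(z) \wedge \{1\}(z))$ conjoined with trivial equalities $\eq(x_i,x_i)$; so assume $k \le n$, in which case $O(n+k)=O(n)$. Next I would invoke the standard fact that the sum of $n$ bits is computed by a Boolean circuit with $O(n)$ gates, each a full adder or a half adder, producing the $\ell := \lceil \log_2(n+1)\rceil$ bits $(d_0,\dots,d_{\ell-1})$ of $x_1+\cdots+x_n$ in binary: use a compressor tree in which each full adder acts as a $3{:}2$ compressor, so that each gate strictly decreases the total number of live bits and hence $O(n)$ of them collapse the $n$ weight-$1$ inputs to at most two numbers, followed by an $O(\log n)$-gate ripple-carry addition. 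Introducing one fresh variable per internal wire gives $O(n)$ existential variables and $O(n)$ gate-constraints over the alphabet $\{\mathrm{FA},\text{half-adder}\}$; appending the $\ell = O(\log n)$ unary constraints $\{k_j\}(d_j)$, where $k_j \in \{0,1\}$ is the $j$-th bit of $k$ (so $\{k_j\}$ is $\{0\}$ or $\{1\}$), pins $\sum_i x_i = k$. The free variables of the resulting formula are exactly $x_1,\dots,x_n$, and its solution set restricted to them is precisely $H_{n,k}$. Replacing each occurrence of $\mathrm{FA}$, the half-adder, $\{0\}$ and $\{1\}$ by its fixed $O(1)$-size pp-definition over $\Gamma$, and hoisting all the resulting existential quantifiers to the front (possible since only conjunction is used), multiplies both counts by a constant, leaving $O(n) = O(n+k)$ constraints and existentially quantified variables over $\Gamma$.

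The one point that needs genuine care — and the only real obstacle — is keeping the count linear rather than $O(n\log n)$: a naive "increment a binary counter once per input bit" circuit spends $\Theta(\log n)$ adders per step in the worst case. The resolution is exactly the $3{:}2$-compressor accounting above, which guarantees a genuinely linear number of gates; everything else (the constant-size simulation of each fixed gadget from $\cclone{\Gamma}=\br$, the quantifier hoisting, and the $k>n$ case) is routine.
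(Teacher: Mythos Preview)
Your proposal is correct and follows essentially the same approach as the paper: build an $O(n)$-gate Boolean circuit computing the binary representation of $\sum_i x_i$, pin the output bits to the binary expansion of $k$, and then replace each gate by a constant-size pp-definition over $\Gamma$ using $\cclone{\Gamma}=\br$. The only cosmetic differences are that the paper uses a hierarchical tree of ripple-carry adders (rather than your $3{:}2$ compressor tree) for the $O(n)$ gate count, and passes through a Tseytin transformation to 3-SAT clauses before pp-defining over $\Gamma$ (rather than pp-defining the adder relations directly).
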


\begin{proof}
  We first observe that one can recursively design a circuit consisting 
  of fan-in 2 gates which computes the sum of $n$ input gates as follows. At the
lowest level, we split the input gates into pairs and compute the sum
for each pair, producing an output of 2 bits for each pair. This can
clearly be done with $O(1)$ gates. At every level $i$ above that, we
join each pair of outputs from the previous level, of $i$ bits each,
into a single output of $i+1$ bits which computes their sum. This can
be done with $O(i)$ gates by chaining full adders. Finally, at level
$\lceil \log_2 n \rceil$, we will have computed the sum. The total
number of gates will be
\[
\sum_{i=1}^{\lceil \log_2 n \rceil} (\frac{n}{2^i}) \cdot O(i),
\]
and it is a straightforward exercise to show that this sums to $O(n)$.
Let $z_1, \ldots, z_{\log_2 n}$ denote the output
  gates of this circuit. By a standard Tseytin transformation we then
  obtain an equisatisfiable 3-SAT instance with $O(n)$ clauses and
  $O(n)$ variables~\cite{tseitin1983}. Next, for each $1 \leq i
  \leq \log_2 n$, add the unary
  constraint $(z_i = k_i)$, where $k_i$ denotes the $i$th bit of $k$
  written in binary. Each such unary constraint can clearly be
  pp-defined with $O(1)$ existentially quantified variables over $\Gamma$. We then pp-define each 3-SAT clause in order to
  obtain a pp-definition of $R$ over $\Gamma$, which in total only requires
  $O(n)$ existentially quantified variables. Note that this can be done
  since we assumed that $\cclone{\Gamma} = \br$ which implies that
  $\Gamma$ can pp-define every Boolean relation. 
\end{proof}

\begin{theorem} \label{theorem:col}
  Let $\Gamma$ be a finite Boolean constraint language such that
  $\cclone{\Gamma} = \br$ and $\malt \notin \ppol(\Gamma)$. Then
  $\SAT(\Gamma)$ does not have a kernel of size $O(n^{2 -
    \varepsilon})$ for any $\varepsilon > 0$, unless  NP $\subseteq$ co-NP/poly.
\end{theorem}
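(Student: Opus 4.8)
The plan is to transfer the known absence of a sub-quadratic kernel for \textsc{Vertex Cover}~\cite{DellM14} to $\SAT(\Gamma)$ via a reduction that increases the parameter --- the number of variables --- only by a constant factor. The algebraic input is Lemma~\ref{lemma:r4}: the hypotheses $\cclone{\Gamma}=\br$ and $\malt_1\notin\ppol(\Gamma)$ already give $\gadget\in\pcclone{\Gamma}$. Put $\Gamma_+=\Gamma\cup\{\gadget\}$; then $\Gamma_+\subseteq\inv(\ppol(\Gamma))$, so Theorem~\ref{theorem:pgalois} yields $\ppol(\Gamma)\subseteq\ppol(\Gamma_+)$, and since the reverse inclusion is obvious, $\ppol(\Gamma)=\ppol(\Gamma_+)$. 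By Theorem~\ref{theorem:ppol_red} there is then a polynomial-time reduction from $\SAT(\Gamma_+)$ to $\SAT(\Gamma)$ that does not increase the number of variables (and multiplies the number of constraints only by a constant depending on $\Gamma$). Hence it suffices to reduce \textsc{Vertex Cover}, parameterized by the number $n$ of vertices, to $\SAT(\Gamma_+)$ producing instances with $O(n)$ variables; note also that $\cclone{\Gamma_+}=\br$, so every Boolean relation --- in particular the cardinality relations used below --- is pp-definable over $\Gamma_+$.

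First I would build the reduction. Given $G=(W,E)$ with $|W|=n$ and a target size $k\le n$ (larger $k$ being trivial), take a variable $x_v$ for each $v\in W$, a ``negation copy'' $\bar x_v$ for each $v\in W$, two global variables $z_0,z_1$, and slack variables $w_1,\dots,w_k$. For every edge $\{u,v\}\in E$ add the constraint $\gadget(x_u,\bar x_v,\bar x_u,x_v,z_0,z_1)$: unfolding the definition of $\gadget$ this pins $z_0=0$, $z_1=1$, $\bar x_u=\neg x_u$, $\bar x_v=\neg x_v$, and enforces $x_u\lor x_v$; crucially the same $z_0,z_1$ and the same negation copies are reused for every edge, so the --- possibly $\Theta(n^2)$ --- edges contribute only $O(1)$, respectively $O(n)$, variables rather than fresh ones per edge. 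Finally add the single cardinality constraint $H_{n+k,k}(x_{v_1},\dots,x_{v_n},w_1,\dots,w_k)$, which expresses $\sum_v x_v\le k$ and which, by Lemma~\ref{lemma:c}, is pp-definable over $\Gamma_+$ with $O(n)$ constraints and $O(n)$ fresh existential variables. The resulting $\SAT(\Gamma_+)$ instance thus has $O(n)$ variables, and it is satisfiable if and only if $G$ has a vertex cover of size at most $k$: from such a cover $S$ set $x_v=[v\in S]$, $\bar x_v=[v\notin S]$, $z_0=0$, $z_1=1$, and make exactly $k-|S|$ of the $w_i$ equal to $1$; conversely $\{v:x_v=1\}$ in any satisfying assignment covers every edge and has size $\sum_v x_v=k-\sum_i w_i\le k$.

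Composing this reduction with the variable-non-increasing reduction of Theorem~\ref{theorem:ppol_red} and a hypothetical kernel of size $O(N^{2-\varepsilon})$ for $\SAT(\Gamma)$ would compress every $n$-vertex \textsc{Vertex Cover} instance to an equivalent $\SAT(\Gamma)$ instance of size $O(n^{2-\varepsilon})$, which contradicts the lower bound of Dell and van Melkebeek~\cite{DellM14} --- \textsc{Vertex Cover} on $n$ vertices admits no compression (into any target language, in particular into $\SAT(\Gamma)$) of size $O(n^{2-\varepsilon})$ --- unless NP $\subseteq$ co-NP/poly. The step I expect to require the most care is not any single deduction but keeping the number of variables genuinely $O(n)$: this is exactly why the constants $z_0,z_1$ and the negation copies $\bar x_v$ must be shared across all edges (a dense $G$ must not force super-linearly many variables) and why the cardinality bound must go through the $O(n)$-gate, logarithmic-depth adder of Lemma~\ref{lemma:c} rather than a naive encoding. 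The genuinely hard content --- manufacturing the gadget $\gadget$ from the bare hypothesis $\malt_1\notin\ppol(\Gamma)$ --- has already been discharged in Lemma~\ref{lemma:r4}.
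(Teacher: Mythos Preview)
Your proposal is correct and follows essentially the same route as the paper: invoke Lemma~\ref{lemma:r4} to obtain $\gadget\in\pcclone{\Gamma}$, use Theorem~\ref{theorem:ppol_red} to pass from $\SAT(\Gamma\cup\{\gadget\})$ to $\SAT(\Gamma)$ without increasing variables, encode each edge with a single $\gadget$-constraint sharing the two constant and the per-vertex negation variables, and enforce the cardinality bound via Lemma~\ref{lemma:c}. The only cosmetic difference is that you make the final compression-lower-bound step (invoking Dell--van~Melkebeek for compression into an arbitrary target language) more explicit than the paper does.
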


\begin{proof}
  We will give a polynomial-time many-one reduction from
  \textsc{Vertex Cover} parameterized by the number of vertices to
  $\SAT(\Gamma \cup \{\gadget\})$, which via
  Theorem~\ref{theorem:ppol_red} and Lemma~\ref{lemma:r4}
  has a reduction to $\SAT(\Gamma)$ which does not
  increase the number of variables. Let $(V,E)$ be
  the input graph and let $k$ denote the maximum size of the cover. 
  First, introduce two fresh variables $x_v$ and $x'_v$ for each $v
  \in V$, and one variable $y_i$ for each $1 \leq i \leq
  k$. Furthermore, introduce two variables $x$ and $y$. For each edge
  $\{u,v\} \in E$ introduce a constraint $\gadget(x_u, x'_v, x'_u, x_v,
  x, y)$, and note that this enforces the constraint $(x_u \vee x_v)$.
  Let $\exists z_1, \ldots, z_m . \phi(x_1, \ldots, x_{|V|}, y_1,
  \ldots, y_{k}, z_1, \ldots, z_m)$ denote the pp-definition $H_{|V| +
    k,k}$ over $\Gamma$ where $m \in O(k + |V|)$, and consisting of at
  most $O(k + |V|)$ constraints. Such a pp-definition must exist
  according to Lemma~\ref{lemma:c}. Drop the
  existential quantifiers and add the constraints of 
  $\phi(x_1, \ldots, x_{|V|}, y_1, \ldots, y_k, z_1, \ldots, z_m)$.
  Let $(V', C)$ denote this instance of $\SAT(\Gamma \cup \{\gadget\})$. Assume first
  that $(V,E)$ has a vertex cover of size $k' \leq k$. We first assign
  $x$ the value 0 and $y$ the value 1. For each $v$ in this cover assign
  $x_v$ the value 1 and $x'_v$ the value 0. For any vertex not included
  in the cover we use the opposite values. We then set $y_1,
  \ldots, y_{k - k'}$ to 1, and $y_{k - k' + 1}, \ldots, y_{k}$ to 0.
  For the other direction, assume that $(V', C)$ is satisfiable. For any
  $x_v$ variable assigned 1 we then let $v$ be part of the vertex
  cover. Since $x_1 + \ldots + x_{|V|} + y_1 + \ldots + y_k = k$, 
  the resulting vertex cover is smaller than or equal to $k$. 
\end{proof}

As an example, let $R^k = \{(b_1, \ldots, b_k) \in
\{0,1\}^{k} \mid b_1 + \ldots + b_k \in \{1,2\}\ \,(\bmod
\,6)\}$ and let $P = \{R^k \mid k \geq 1\}$. The kernelization status of $\SAT(P)$ was left open in Jansen
and Pieterse~\cite{JansenP16MFCS}, and while a precise upper bound
seems difficult to obtain, we can at least prove that this problem
does not admit a kernel of linear size, unless NP $\subseteq$
co-NP/poly. To see this, observe that
$(0,0,1),(0,1,1),(0,1,0) \in R^3$ but $\malt_1((0,0,1),(0,1,1),(0,1,0))
= (0,0,0) \notin R^3$. The result then follows from
Theorem~\ref{theorem:col}.

At this stage, it might be tempting to conjecture that $\malt_1
\in \ppol(\Gamma)$ is also a sufficient condition for a Maltsev
embedding, and, more ambitiously, that this is also a sufficient
condition for a kernel with $O(n)$ constraints. 
We can immediately rule out the first possibility by exhibiting a
relation $R$ and a
universal partial Maltsev operation $\malt$ such that $R$ is invariant
under $\malt_1$ but not under $\malt$. For example, first define the
operation $\malt_2$ as $\rest{q}$ where $q(x_1, x_2, x_3, x_4, x_5,
x_6, x_7, x_8, x_9) = u(u(x_1, x_2, x_3), u(x_4, x_5, x_6), u(x_7,
x_8, x_9))$. Second, consider the relation $R$ of arity $|\domain(\malt_2)|$ consisting of 9 tuples
$t_1, \ldots, t_9$ such that there for each $t \in \domain(\malt_2)$
exists exactly one $1 \leq i \leq
|\domain(\malt_2)|$ such that $(t_1[i], \ldots, t_9[i]) = t$. Then, by
definition, $\malt_2(t_1, \ldots, t_9) \notin R$ (since $\malt_2$ is
not a partial projejction), and hence does not preserve $R$, but one can verify that
this relation is preserved by $\malt_1$.
To rule out the second
possibility we have to find a constraint language $\Gamma$ such that
$\malt_1$ preserves $\Gamma$ but $\SAT(\Gamma)$ does not have a kernel
with $O(n)$ constraints. In fact, we will prove a stronger result, and
show that whenever $P$ is a finite set of partial operations such that
$\cclone{\inv(P)} = \br$ (and thus, $\SAT(\inv(P))$ is NP-complete),
then $\SAT(\inv(P))$ does not admit a polynomial kernel, unless NP $\subseteq$ co-NP/poly.
This is in contrast to the existing parameterized dichotomy results for CSP~\cite{bulatov14,KratschMW16,KratschW10,KrokhinM12,Marx05},
but as noted in the introduction,  
it is an expected conclusion when the parameter is~$n$; cf.~\cite[Lemma~35]{lagerkvist2016b}.

\begin{restatable}{theorem}{theoremnofinite} \label{theorem:nofin}
  Let $P$ be a finite set of partial polymorphisms
  such that $\cclone{\inv(P)}=\br$. Then
  $\SAT(\inv(P))$ does not admit a polynomial kernel
  unless NP $\subseteq$ co-NP/poly.
\end{restatable}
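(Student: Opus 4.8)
The plan is to rule out polynomial kernels by exhibiting an \emph{OR-cross-composition} into $\SAT(\inv(P))$ parameterized by the number of variables: a polynomial-time algorithm that, given instances $I_1,\dots,I_t$ of size at most $s$ of some fixed NP-hard problem, outputs a single instance of $\SAT(\inv(P))$ on $\mathrm{poly}(s+\log t)$ variables which is a yes-instance if and only if some $I_j$ is. Given such a composition, a polynomial kernel for $\SAT(\inv(P))$ would allow one to compress the disjunction of exponentially many NP-hard instances, which is impossible unless NP $\subseteq$ co-NP/poly~\cite{DellM14}. As the source problem I would take $\SAT(\inv(P))$ itself, which by the hypothesis $\cclone{\inv(P)}=\br$ (and Schaefer's dichotomy~\cite{sch78}) is NP-complete; after the standard normalisation through a polynomial equivalence relation one may assume the $I_j$ all have the same number of variables $n\le s$, the same number of constraints $m\le s$, and the same multiset of constraint arities.

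The composed instance would use three groups of variables: $q:=\lceil\log_2 t\rceil$ \emph{selector} variables encoding an index $j^{\star}$ in binary; $n$ \emph{solution} variables intended to carry a satisfying assignment of $I_{j^{\star}}$; and a block of $\mathrm{poly}(n,m,q)$ \emph{auxiliary} variables simulating, through a fixed-shape Boolean circuit, the verification that the solution variables satisfy the selected instance. The constraints split into (i) for each $j$, a single \emph{selector-guarded} constraint whose relation accepts every assignment in which the selector differs from $\mathrm{bin}(j)$ and otherwise forces the solution and auxiliary variables to encode a legal pair (assignment, accepting computation) for $I_j$; and (ii) $O(1)$ constraints forcing the selector to encode a legal index. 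For the index $j^{\star}$ actually encoded, the guarded constraints pin the solution variables to a satisfying assignment of $I_{j^{\star}}$ if one exists, while every other guarded constraint is vacuously satisfied, so the composed instance is satisfiable if and only if some $I_j$ is; and it has $q+n+\mathrm{poly}(n,m,q)=\mathrm{poly}(s+\log t)$ variables, as required.

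The heart of the argument, which I expect to be the main obstacle, is to realise constraints (i) and (ii) \emph{over} $\inv(P)$ within this budget. Two ingredients are needed. First, since $\cclone{\inv(P)}=\br$, the language $\inv(P)$ pp-defines every Boolean relation, and, exactly as in the circuit encoding of Lemma~\ref{lemma:c}, it does so with a number of existentially quantified variables proportional to the size of a Boolean circuit for the target relation; this lets us implement the selector test and the verification circuit, with the pp-auxiliary variables \emph{shared} across all $t$ guarded constraints --- which is sound because each guarded constraint with $j\ne j^{\star}$ is already satisfied regardless of those variables. Second, the instance-specific data of $I_j$ (which relations occur, and on which variables) must be absorbed into the relation of the single guarded constraint for $I_j$ rather than into new variables; here one invokes \cite[Lemma~35]{lagerkvist2016b}, which guarantees that $\inv(P)$ contains doubly-exponentially many relations of each arity, so that a constraint of polynomially bounded arity is still expressive enough to carry a whole sub-instance. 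Turning this abundance into an explicit, selector-compatible gadget family --- i.e.\ proving that for finite $P$ with $\cclone{\inv(P)}=\br$ the required guarded-constraint relations genuinely lie in $\inv(P)$ --- is the technically delicate point; failing it, the variable count would blow up by a factor $t$. Once the composition is in place, soundness and the appeal to the cross-composition framework are routine, and the claimed lower bound follows.
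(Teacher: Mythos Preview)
Your proposal has a genuine gap at exactly the point you flag as ``technically delicate''. You need each selector-guarded relation $G_j$ (or some pp-implementation of it) to lie in $\inv(P)$ without spending $\Omega(t)$ fresh variables, but your only justification is the counting statement of \cite[Lemma~35]{lagerkvist2016b}. That lemma tells you $\inv(P)$ contains doubly-exponentially many relations of each arity; it says nothing about whether the \emph{specific} relations $G_j$ are among them, so as written the argument does not close. Your fallback --- pp-define each $G_j$ via a circuit as in Lemma~\ref{lemma:c} and share the pp-auxiliaries across all $j$ --- is also unsound as stated: the fact that $G_j$ is vacuously satisfied for $j\neq j^\star$ means $\exists Z\,\phi_j(X,Z)$ holds, not that $\phi_j(X,Z)$ holds for the \emph{particular} $Z$ witnessing $\phi_{j^\star}$. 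Different $\phi_j$ will in general impose incompatible constraints on the shared auxiliaries.

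The paper takes a different and more direct route that sidesteps cross-composition entirely. It gives, for every $k$, a polynomial-parameter reduction from $k$-\textsc{SAT} on $n$ variables to $\SAT(\inv(P))$ on $n^{O(1)}$ variables with the exponent depending only on $P$ and not on $k$; combined with the Dell--van~Melkebeek bound that $k$-\textsc{SAT} has no $O(n^{k-\varepsilon})$ compression, this rules out any polynomial kernel. The engine of the reduction is a padding construction: letting $c=\max_{\phi\in P}\ar(\phi)$ and $d=c^2$, one introduces a variable $y_{\bar x,f}$ for every $\bar x\in X^d$ and every $d$-ary Boolean function $f$, and proves that for \emph{any} relation $R$ the padded relation $R'(X,Y)\equiv R(X)\wedge\bigwedge_{y_{\bar x,f}}(y_{\bar x,f}=f(\bar x))$ is invariant under every non-total partial operation of arity at most $c$ (hence lies in $\inv(P)$). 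The reason is that any $c'\le c$ distinct tuples of $R'$ can be separated on at most $c^2$ coordinates of $X$, and among the padding columns indexed by those coordinates one finds every $c'$-tuple over $\{0,1\}$, in particular one outside the domain of the given partial operation. Each $k$-clause is then replaced by its padded version, all sharing the same $|X|^{O(d)}$ padding variables.

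This padding lemma is precisely the tool your cross-composition is missing: it would let you place the guarded relations $G_j$ inside $\inv(P)$ at the cost of a single shared block of padding variables whose values are \emph{functionally determined} by the base variables (so sharing is genuinely conflict-free). But once you have it, the detour through OR-composition is unnecessary --- the paper's direct reduction from $k$-\textsc{SAT} is shorter and avoids the auxiliary-sharing subtleties altogether.
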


\begin{proof}
  We will show a reduction from $k$-SAT on $n$ variables
  to $\SAT(\inv(P))$ on $O(n^c)$ variables for some absolute
  constant $c$ that only depends on $P$. 
  Since $k$-SAT admits no kernel of size $O(n^{k-\varepsilon})$
  for any $\varepsilon>0$ unless NP $\subseteq$ co-NP/poly~\cite{DellM14},
  and since $c$ is independent of $k$, the result will follow. 

  The strategy to show the reduction is similar
  to that of Lemma~35 of~\cite{lagerkvist2016b}).
  Let $c$ be the largest arity of a partial polymorphism in $P$, 
  and let $(X,C)$ be an instance of $k$-SAT, $|X|=n$.
  Create a set of padding variables 
  $Y=\{y_{\bar x,f} \mid \bar x \in X^d, f : \{0,1\}^d \to \{0,1\}\}$,
  one for every $(x_1,\ldots,x_d) \in X^d$ and every $d$-ary function,
  $d=c^2$.  We will constrain so that for every $y_{\bar x, f} \in Y$ and 
  every satisfying assignment, we have $y_{\bar x, f}=f(\bar x(1), \ldots, \bar x(d))$. 
  The following is a central observation. 
  
  \begin{claim}
    Let $R \subseteq \{0,1\}^r$ be any $r$-ary Boolean relation,
    and let $X=\{x_1,\ldots,x_r\}$. Let $Y$ be a set of padding
    variables for $X$ as above. Define a relation $R'$ by
    \[
    R'(X,Y) \equiv R(X) \land \bigwedge_{y_{\bar x, f} \in Y} (y_{\bar x, f} = f(\bar x(1), \ldots, \bar x(c))).
    \]  
    Then $R(X) \equiv \exists Y R'(X,Y)$ and $R'$ is invariant under
    every non-total partial operation of arity at most $c$.
  \end{claim}
  \begin{proof}
  Let $\phi$ be a non-total partial operation of arity $c' \leq c$, 
  and assume that $R'$ is not invariant under $\phi$, i.e.,
  there are tuples $t_1, \ldots, t_{c'} \in R'$ such that
  $\phi(t_1,\ldots,t_{c'})$ is defined and not contained in $R'$.
  We assume that all tuples $t_i$ are distinct, as
  otherwise the application $\phi(t_1,\ldots,t_c)$ defines an operation $\phi'$ 
  of arity $|\{t_1,\ldots,t_{c'}\}|$ for which we can repeat the argument below.
  Let $u_1, \ldots, u_{c'}$ be the projections of the tuples onto $X$.
  Note that the tuples $u_1, \ldots, u_{c'}$ are distinct,
  and that $\phi(u_1, \ldots, u_{c'})$ is defined.
  Let $I \subseteq [r]$ be a minimal set of ``witness positions'' for the distinctness of $u$,
  i.e., for every pair $i, j \in [c']$, $i \neq j$, there is a position $p \in I$
  such that $u_i[p] \neq u_j[p]$. Note that $|I| \leq c^2$.
  Let 
  $t \in \{0,1\}^{c'}$ be a tuple for which $\phi$ is undefined.
  Then there exists a function $f: \{0,1\}^{|I|} \to \{0,1\}$ such that
  $f(\pr_I u_i)=t[i]$ for each $i \in [c']$, 
  since the projection onto $I$ is distinct for all tuples $u_i$, 
  and since $|I| \leq d^2$, there exist a  variable $y_{\bar x, f}$ in $Y$. 
  This implies that $\phi(t_1,\ldots,t_c)$ is undefined,
  since in particular $\phi$ is undefined when applied to the position corresponding
  to $y_{\bar x, f}$.
  Since $\phi$ was generically chosen, the claim follows.
\end{proof}

We can now wrap up the proof as follows. 
By Theorem~\ref{theorem:pgalois},
the language $\inv(P)$ has a quantifier-free pp-definition of any relation $R$
such that $R$ is invariant under any partial operation in $P$. 
By the above claim, this in particular includes the padded
relation $R'$ for any given relation $R$.
Now, if $(X,C)$ is an instance of $k$-SAT,  with $|X|=n$ as above,
and if $Y$ is the set of padding variables,
then we can output an instance of $\SAT(\inv(P))$
by replacing every $k$-clause in the input, defining a relation $R(V)$
for some $V \subseteq X$,
by the relation $R'(V, Y_V)$ according to the above claim.
Note in particular that the padding variables $Y_V$ used in this reduction
can be chosen from the set $X_V$. 
Finally, since $k$ is constant, the relations $R'(V, Y_V)$ 
and hence the output
can be enumerated in polynomial time.
\end{proof}

\section{Concluding Remarks and Future Research} 
We have studied the kernelization properties of
$\SAT$ and $\CSP$ problems parameterized by the number of variables with tools from
universal algebra. We particularly focused on problems with linear kernels,
and showed that a $\CSP$ problem has a kernel with $O(n)$ constraints if
it can be embedded into a $\CSP$ problem preserved by a Maltsev
operation; thus extending previous results in this direction. 
On the other hand, we showed that a $\SAT$ problem not preserved
by a partial Maltsev polymorphism does not admit such a kernel, 
unless NP $\subseteq$ co-NP/poly. 
This shows that the algebraic approach is viable for studying such 
fine-grained kernelizability questions.
More generally, we also gave 
algebraic conditions for the existence of a kernel with $O(n^c)$ constraints, $c>1$,
generalising previous results on kernels for $\SAT$ problems defined via
low-degree polynomials over a finite field.
Our work opens several directions for future research. 

\paragraph{A dichotomy theorem for linear kernels?} Our results suggest a
possible dichotomy theorem for the existence of linear kernels for $\SAT$ 
problems, at least for finite languages. However, two gaps remain towards
such a result. On the one hand, we have proven that any language $\Gamma$
preserved by the universal partial Maltsev operations admits a Maltsev
embedding over an infinite domain. However, the kernelization algorithms
only work for languages with Maltsev embeddings over finite domains. 
Does the existence of an infinite-domain Maltsev embedding 
for a finite language imply the existence of a Maltsev embedding over
a finite domain? Alternatively, can the algorithms be adjusted to work
also for languages with infinite domains, given that this domain is 
finitely generated in a simple way?
On the other hand, we only have necessity results
for the first partial operation $\malt_1$ out of an infinite set of
conditions for the positive results. Is it true that every universal
partial Maltsev operation is a partial polymorphism of every language with a
linear kernel, or do there exist $\SAT$ problems with linear kernels
that do not admit Maltsev embeddings?

\paragraph{Cases of higher degree.} Compared to the case of linear kernels,
our results on kernels with $O(n^c)$ constraints, $c>1$,
are more partial. Does the combination of degree extensions and $k$-edge
embeddings cover all cases, or are there further $\SAT$ problems
with non-trivial polynomial kernel bounds to be found?

\paragraph{The Algebraic CSP Dichotomy Conjecture.} 
Several solutions to the CSP dichotomy conjecture have
been announced~\cite{bulatov2017,feder2017,zhuk2017}. If correct,
these algorithms solve $\CSP(\Gamma)$ in polynomial time whenever
$\Gamma$ is preserved by a {\em Taylor term}. One can then define the concept of a Taylor
embedding, which raises the question of whether the proposed algorithms
can be modified to construct polynomial kernels. More generally, when
can an operation $f$ such that $\CSP(\inv(\{f\}))$ is tractable be
used to construct improved kernels? On the
one hand, one can prove that {\em $k$-edge operations},
which are generalized Maltsev operations, can be
used to construct kernels with $O(n^{k-1})$ constraints via a variant of the {\em
  few subpowers algorithm}. On the other hand, it is known that
relations invariant under  {\em semilattice operations} can
be described as generalized Horn formulas, but it is not evident how
this property could be useful in a kernelization procedure. 

\bibliography{references_kernelization}
\bibliographystyle{plain}
\end{document}